\newtheorem{theorem}{Theorem}
\theoremstyle{plain}
\newtheorem{corollary}{Corollary}
\numberwithin{equation}{section}
\begin{document}

\title[On the solution of Four Color Problem]{Properties of the dual planar triangulations}
\author{Natalia L. Malinina}
\email[Natalia L. Malinina]{malinina806@gmail.com}%
\thanks{This paper is a preliminary version of an article for some mathimatical journal}

\date{December 12, 2012}
\subjclass{Primary 05C10; Secondary 05C15} %
\keywords{Dual graphs, planar triangulation, four color problem}%
\dedicatory{Dedicated to my father, Leonid Malinin}

\begin{abstract}
This article is devoted to the properties of the planar triangulations. The conjugated planar triangulation will be introduced and on the base of the properties, which were achieved by the other authors there will be proved some theorems, which will show the properties of the dual triangulations. Also the numeric properties of the dual planar triangulations will be examined for the sake of understanding the interdependences of the cyclimatic numbers of different graphs between themselves. We'll see how the cyclomatic number of the planar conjugated triangulation depends on the cyclomatic number of the planar triangulation and how its increment depends on the number of the vertexes. These characteristics will be further very important for examining of Four Color Problem. The properties of the dual matrixes will also be examined. We will see that both matrixes on the one hand must meet the equal requirements, but on the other hand we will see that one characteristic cannot be fulfilled. This fact will further form the restrictions for the solution of Four Color Problem.

\end{abstract}
\maketitle

\section {Introduction: the basic properties of the planar graphs}
 
In real life there are many tasks, which depend on such topological characterization as either a possibility or impossibility to place the graph on the plane. So, let us begin with the brief review of the planar graph's properties, which were proved at various times by different authors.

\begin {theorem}
The boundaries of the different finite faces in the planar topological graph $G$ form the base of the independent cycles \cite {Berge}.
\end{theorem}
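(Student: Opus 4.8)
The plan is to prove that the finite-face boundaries form a basis for the cycle space of a connected planar graph $G$. I would work in the cycle space over $\mathbb{Z}_2$ (or equivalently the first homology group of the graph), where a cycle is identified with its edge set and addition is symmetric difference. The dimension of this space is the cyclomatic number $\mu(G) = |E| - |V| + 1$ for a connected graph. So the entire statement reduces to two claims: that the finite face boundaries are linearly independent, and that there are exactly $\mu(G)$ of them, from which spanning follows automatically by a dimension count.

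First I would invoke Euler's formula $|V| - |E| + |F| = 2$ for a connected plane graph, where $F$ counts all faces including the single unbounded (infinite) face. This gives the number of \emph{finite} faces as $|F| - 1 = |E| - |V| + 1 = \mu(G)$, which matches the dimension of the cycle space exactly. So if I can establish linear independence of the finite-face boundary cycles, the count forces them to be a basis.

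The core step is linear independence. Each finite face $f$ has a boundary cycle $C_f$; I would fix a planar embedding and argue topologically. The key observation is that every edge $e$ of $G$ lies on the boundary of at most two faces, and the unbounded face is incident to certain "outer" edges. To show a nontrivial $\mathbb{Z}_2$-combination $\sum_{f \in S} C_f$ cannot vanish for any nonempty set $S$ of finite faces, I would consider the boundary of the region obtained as the union of the faces in $S$: this region is a proper nonempty subset of the plane, hence has nonempty boundary, and that boundary consists precisely of the edges appearing an odd number of times in $\sum_{f \in S} C_f$. Equivalently, pick a face $f_0 \in S$ adjacent to a face not in $S$ (or to the infinite face); the edge separating them appears exactly once in the sum and so survives, proving the sum is nonzero. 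This is the standard fact that the finite-face boundaries are independent because no proper subset of them can "cancel" along a separating edge.

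The hard part will be making the topological separating-edge argument rigorous rather than merely intuitive — one must be careful that the union of a face-set always has a separating edge on its frontier, which ultimately rests on the Jordan curve theorem and the planarity of the embedding. Since Theorem~1 is attributed to Berge, I would in practice cite that source for the independence and embedding facts and present only the Euler-formula dimension count in full detail, noting that independence plus the exact count $\mu(G)$ yields that the finite-face boundaries form a basis of the independent cycles.
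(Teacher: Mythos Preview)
The paper does not actually prove this theorem; it is stated with a citation to Berge and no argument is given. Your proposal supplies the standard cycle-space proof over $\mathbb{Z}_2$: count finite faces via Euler's formula to match the cyclomatic number, then establish linear independence by a separating-edge argument, and conclude by dimension. That argument is correct and is essentially the one found in Berge and most textbooks.

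One point worth flagging about how your proof interacts with the paper's logical order: immediately after this theorem the paper \emph{derives} Euler's formula as a corollary of it (the number of finite faces equals the cyclomatic number because the face boundaries form a basis, hence $f=\nu+1$, hence $n-m+f=2$). Your proof runs the implication the other way, invoking Euler's formula to obtain the count $|F|-1=\mu(G)$. As a standalone proof this is fine, since Euler's formula has an independent inductive proof; but if you want your argument to slot into the paper without circularity, you should either (i) insert a short independent proof of Euler's formula first (induction on edges, deleting an edge on a cycle reduces $|E|$ and $|F|$ by one each), or (ii) bypass the counting shortcut and prove spanning directly, e.g.\ by showing that any cycle $C$ equals the $\mathbb{Z}_2$-sum of the boundaries of the finite faces enclosed by $C$. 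Either fix is routine; the independence half of your argument needs no change.
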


\begin {corollary}
For the planar map there exists the Euler formula \cite{Euler}:  $n-m+f=2$, where: $n$ --- is a number of the vertexes; $m$ --- is a number of the edges; $f$ --- is a number of the graph's faces. Indeed, the number of the finite faces is equal to the cyclomatic number $\nu$, so: $f=\nu+1=(m-n+1)=m-n+2$. 
And, finally: $n-m+f=2$. 
\end{corollary}

Many corollaries come from the Euler theorem, all of them were presented by Frank Harary \cite {Harary}.

\begin {corollary}
If $G$ is a planar $(n,m)$ map, in which every face presents a $n$--cycle, then: $m=p(n-2)/(p-2)$.
\end {corollary}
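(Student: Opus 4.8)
The plan is to combine a double-counting incidence argument with the Euler relation from Corollary 1. First a word on the reading of the hypothesis: I interpret it as saying that every face is bounded by a cycle of one common length $p$, so that the letter $n$ inherited from the pair $(n,m)$ still denotes the number of vertices and $m$ the number of edges. This is the only reading under which the asserted identity $m=p(n-2)/(p-2)$ is sensible (the phrase ``$n$--cycle'' in the statement must be a slip for ``$p$--cycle''), since under the alternative convention one would instead obtain $m=n(p-2)/(n-2)$. Under the intended reading the goal is to express the face count $f$ through $m$ and $p$, substitute into $n-m+f=2$, and solve for $m$.

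The heart of the argument is to count edge--face incidences in two ways. Since every face is a $p$--cycle, each of the $f$ faces carries exactly $p$ boundary edges, contributing $pf$ incidences. On the other hand, in such a map every edge lies on the boundary of exactly two faces, so the same total equals $2m$. Equating the two counts gives
\[
pf = 2m, \qquad\text{so}\qquad f = \frac{2m}{p}.
\]
I would then invoke $n-m+f=2$ from Corollary 1 and substitute, obtaining
\[
n - m + \frac{2m}{p} = 2 .
\]
Clearing denominators and grouping the terms containing $m$ yields $m(p-2)=p(n-2)$, whence $m=p(n-2)/(p-2)$, which is the claim.

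The step I expect to demand the most care is the incidence identity $pf=2m$, i.e. the assertion that each edge separates exactly two \emph{distinct} faces. This can fail when the graph has a bridge (incident to only one face) or a pendant edge, and it tacitly requires that the outer face, too, be a genuine $p$--cycle rather than a longer or self-touching boundary walk. The clean way to secure it is to observe that the hypothesis ``every face is a $p$--cycle'' already forces the map to be $2$--edge--connected with no face meeting an edge twice, so that the edge side of the edge--face incidence structure is exactly $2$--regular; once this is granted, the remaining manipulation is purely algebraic and the substitution into Euler's formula completes the proof.
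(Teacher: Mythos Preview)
Your argument is correct and is exactly the standard derivation: the double count $pf=2m$ combined with Euler's relation $n-m+f=2$ yields $m=p(n-2)/(p-2)$, and you are right that the ``$n$--cycle'' in the statement is a slip for ``$p$--cycle''. The paper itself does not supply a proof of this corollary at all; it simply records it among the consequences of Euler's formula and attributes it to Harary, so there is no in-paper argument to compare against. Your discussion of the $2$-edge-connectedness needed for $pf=2m$ is a nice piece of hygiene that the paper (and the original source) leave implicit.
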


\begin {corollary}
If $G$ is a maximal planar $(n,m)$ graph, then its every face is a triangle and $m=3n-6$. If $G$ - is a planar graph, in which any face is 4 cycle, then $m=2n-4$, so, the number of the vertexes is even.
\end {corollary}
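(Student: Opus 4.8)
The plan is to recognize both numerical identities as special cases of the previous corollary, so that the real work is geometric rather than arithmetic. If $G$ is maximal planar, I would first argue that in any plane embedding every bounded face (and the unbounded one) is a triangle: suppose some face were bounded by a cycle $v_1 v_2 \dots v_k$ with $k \geq 4$; then $v_1$ and $v_3$ are not consecutive on that face, so a new edge $v_1 v_3$ can be routed through the interior of the face without meeting any existing edge, contradicting maximality. Once every face is a $3$-cycle, the previous corollary with $p = 3$ gives $m = 3(n-2)/(3-2) = 3n - 6$; equivalently one may double-count incidences via $2m = 3f$ and substitute into Euler's formula $n - m + f = 2$.

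The step I expect to need the most care is the chord argument, because a face boundary need not be a simple cycle when $G$ fails to be $2$-connected, and even when it is, the candidate chord $v_1 v_3$ might already be present elsewhere in the graph, which would make the added edge a multi-edge rather than a genuine new edge. The clean way around this is to use the fact (already available through the cited results of Harary) that a maximal planar graph on $n \geq 4$ vertices is $3$-connected, so each face boundary is an induced simple cycle and the diagonal is always legitimately addable; I would invoke this to close the gap.

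For the second statement the hypothesis already names the face length, so I would simply apply the previous corollary with $p = 4$ to obtain $m = 4(n-2)/(4-2) = 2n - 4$. The main obstacle here is the concluding parity remark. Counting incidences gives $2m = 4f$, hence $m = 2f = 2(n-2)$ is even for every $n$, so the parity that the computation actually forces is the parity of the edge count, not of the vertex count. The assertion that $n$ itself must be even does not follow from the quadrangular condition alone: the complete bipartite graph $K_{2,3}$ embeds in the plane with all three of its faces quadrilaterals, yet has $n = 5$. I would therefore either restate the conclusion as ``$m$ is even'', or else supply and justify the additional hypothesis under which a genuine vertex-parity statement is intended, before attempting to prove that refined form.
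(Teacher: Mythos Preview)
The paper does not supply its own proof of this corollary: it simply lists it among the consequences of Euler's formula ``presented by Frank Harary'' and moves on. So there is nothing to compare your argument \emph{to}; your derivation of $m=3n-6$ and $m=2n-4$ by specializing the preceding corollary (equivalently, by the incidence counts $2m=3f$ and $2m=4f$ plugged into Euler's formula) is the standard route and is exactly what the citation to Harary is meant to invoke. Your caution about non-simple face boundaries in the maximal-planar case is appropriate; invoking $3$-connectedness, as you propose, is the usual way to make the diagonal argument rigorous.

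Where your proposal goes beyond the paper is in catching a genuine error in the stated corollary. You are right that $m=2n-4$ forces only $m$ even, not $n$ even, and your counterexample $K_{2,3}$ is correct: it is planar with $n=5$, $m=6$, $f=3$, and all three faces are $4$-cycles, so the ``number of the vertexes is even'' clause is false as written. The paper's own claim is simply mistaken (or mis-transcribed from Harary), and your suggested repair---replace the conclusion by ``$m$ is even'', or add a hypothesis that actually forces vertex parity---is the right response.
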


\begin {corollary}
If $G$ --- is an arbitrary planar $(n,m)$ graph and $n \ge 3$, then: $m \le 3n-6$. If graph $G$ is doubly connected and does not contain the triangles, then: $m \le 2n-4$.
\end {corollary}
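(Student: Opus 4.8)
The plan is to obtain both inequalities from the Euler formula $n-m+f=2$ (established in the corollary above) together with a single combinatorial identity coming from double counting the incidences between edges and faces. First I would record this identity: writing $\ell(\phi)$ for the number of edges met while tracing the boundary of a face $\phi$, one has
\begin{equation}
\sum_{\phi}\ell(\phi)=2m,
\end{equation}
because every edge lies on the boundary of exactly two faces, a bridge being counted for the same face twice. Everything else follows by bounding $\ell(\phi)$ from below and substituting into the Euler formula.

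For the first inequality I would argue that, since $G$ is simple and $n\ge 3$, no face can be bounded by fewer than three edges, so $\ell(\phi)\ge 3$ for every face. Summing and using the identity gives $3f\le\sum_{\phi}\ell(\phi)=2m$, hence $f\le\tfrac{2}{3}m$. Substituting $f=m-n+2$ from the Euler formula yields
\begin{equation}
m-n+2\le\tfrac{2}{3}m,
\end{equation}
which rearranges to $m\le 3n-6$. If $G$ is disconnected I would first adjoin edges joining the components (this keeps the graph simple and planar and only increases $m$), prove the bound for the resulting connected graph, and then observe that the original $m$ is no larger.

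For the second inequality the extra hypotheses do the work. When $G$ is doubly connected the boundary of every face is a genuine cycle, and since $G$ contains no triangles every such cycle has length at least four, so now $\ell(\phi)\ge 4$. The same computation gives $4f\le 2m$, i.e. $f\le\tfrac{1}{2}m$, and substituting $f=m-n+2$ produces
\begin{equation}
m-n+2\le\tfrac{1}{2}m,
\end{equation}
that is $m\le 2n-4$.

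I expect the only real obstacle to be the justification of these lower bounds on $\ell(\phi)$: establishing rigorously that simplicity together with $n\ge 3$ forces every face to have boundary length at least three, and that double connectivity makes each face boundary an honest cycle so that the triangle-free hypothesis upgrades this to length at least four. These are precisely the points at which the hypotheses are used, and once they are settled both inequalities are immediate consequences of the Euler formula.
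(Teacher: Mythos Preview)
Your argument is correct and is the standard textbook derivation of these two inequalities from Euler's formula via double counting of edge--face incidences. Note, however, that the paper does not actually supply a proof of this corollary: it is listed in the introductory survey as one of several known consequences of Euler's theorem, with the attribution ``all of them were presented by Frank Harary \cite{Harary},'' so there is no proof in the paper to compare against. Your write-up is exactly the argument one finds in Harary and elsewhere, and the points you flag as needing care (that simplicity with $n\ge 3$ forces $\ell(\phi)\ge 3$, and that $2$-connectedness makes each face boundary a cycle) are indeed the only places where the hypotheses enter.
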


\begin {corollary}
Every planar graph $G$ with $n \ge 4$ vertexes has at least $4$ vertexes, which degrees are not more then $5$.
\end {corollary}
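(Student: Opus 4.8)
The plan is to reduce the statement about the arbitrary planar graph $G$ to the much more rigid situation of a triangulation, where the preceding corollaries apply with full force, and then to transfer the conclusion back. The point is that counting degrees directly in $G$ is too weak: a single vertex of very small degree could by itself absorb all of the available ``degree surplus,'' so one cannot force four small-degree vertices this way. Passing to a triangulation removes exactly this difficulty, because there every degree is at least $3$.

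First I would realize $G$ as a spanning subgraph of a maximal planar graph $\bar G$ on the same $n$ vertices: starting from a plane drawing of $G$, add edges between non-adjacent vertices, one at a time, as long as planarity is preserved. When no further edge can be added, the result $\bar G$ is a triangulation, so by the maximality corollary it satisfies $\bar m = 3n-6$, and, since $n \ge 4$, every vertex of $\bar G$ has degree at least $3$ (a smaller degree is impossible in a simple triangulation on at least four vertices). Writing $\bar d(v)$ for the degree of $v$ in $\bar G$, the handshaking identity combined with $\bar m = 3n-6$ gives the exact balance
\[
\sum_{v}\bigl(6 - \bar d(v)\bigr) = 6n - 2\bar m = 12 .
\]

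Next I would read off the four vertices. Every vertex with $\bar d(v) \ge 6$ contributes a non-positive term to the sum above, while every vertex with $\bar d(v) \le 5$ contributes $6 - \bar d(v) \in \{1,2,3\}$, because $\bar d(v) \ge 3$; in particular each such term is at most $3$. Hence, if $k$ denotes the number of vertices of $\bar G$ of degree at most $5$, the balance forces $12 \le 3k$, that is $k \ge 4$. Finally I would transfer this back to $G$: since $\bar G$ was obtained from $G$ only by adding edges, $d_G(v) \le \bar d(v)$ for every $v$, so each of these (at least four) vertices already has degree at most $5$ in $G$ itself, which is the assertion.

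I expect the only genuinely delicate point to be the reduction step, namely the claim that every plane graph with $n \ge 4$ can be completed to a triangulation whose minimum degree is at least $3$; once this is in hand the degree bookkeeping is immediate and, moreover, sharp (four vertices of degree $3$ in a triangulation with all remaining degrees equal to $6$ show that the constant $4$ cannot be improved). The temptation to run the surplus count on $G$ directly should be resisted, since without the lower bound $d(v) \ge 3$ it yields only two small-degree vertices; securing a lower bound on the minimum degree, via the completion to a triangulation, is precisely what upgrades the count to the required four.
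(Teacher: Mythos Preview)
The paper does not actually supply a proof of this corollary: it is listed in the introductory survey of known facts (``Many corollaries come from the Euler theorem, all of them were presented by Frank Harary \cite{Harary}'') and is simply quoted without argument. So there is no in-paper proof to compare against.

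Assessed on its own, your argument is correct and is the standard one. The key maneuver---completing $G$ to a triangulation $\bar G$ on the same vertex set so as to secure the lower bound $\bar d(v)\ge 3$ before running the surplus identity $\sum_v(6-\bar d(v))=12$---is exactly what is needed, and you identify clearly why the naive count on $G$ itself only yields two small-degree vertices. The two points that deserve a word of justification are both things you flag: that any simple plane graph on $n\ge 3$ vertices can be extended, by repeatedly inserting edges across faces, to a simple maximal planar graph on the same vertex set; and that in such a triangulation with $n\ge 4$ no vertex can have degree $\le 2$ (a degree-$2$ vertex would force the two incident triangular faces to share all three boundary vertices, whence $n=3$). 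With those in hand, your bookkeeping $12\le 3k$ and the monotonicity $d_G(v)\le \bar d(v)$ finish the job.
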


\begin {theorem}
The graph is the maximal planar graph if and only if it is a planar triangulation \cite {Harary}.
\end {theorem}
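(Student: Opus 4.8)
The plan is to prove the two implications separately, in each case leaning on the consequences of Euler's formula already recorded above: the bound $m \le 3n-6$ valid for every planar $(n,m)$ graph with $n \ge 3$, together with the fact that equality $m = 3n-6$ is attained precisely when every face is a $3$-cycle. Throughout, \emph{maximal planar} means that no edge can be added between two nonadjacent vertices without destroying planarity, and \emph{planar triangulation} means that every face of the embedding, the infinite one included, is bounded by a triangle.

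First I would show that a planar triangulation is maximal planar. Since every face is a $3$-cycle, double counting incidences between edges and faces gives $2m = 3f$, and combining this with $n - m + f = 2$ yields $m = 3n-6$. Suppose, for contradiction, that $G$ is not maximal: then some edge $e$ joining two nonadjacent vertices can be drawn while preserving planarity, so that $G+e$ is a planar graph on $n$ vertices with $m+1 = 3n-5$ edges, contradicting the corollary $m \le 3n-6$. Hence no such $e$ exists and $G$ is maximal planar.

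For the converse I would argue that a maximal planar graph is a triangulation by contradiction. Assume $G$ is maximal planar yet has a face $F$ whose bounding walk $v_1 v_2 \cdots v_k$ has length $k \ge 4$. The aim is to exhibit two boundary vertices that are nonadjacent and can be joined by an arc routed through the interior of $F$; adding this chord keeps the embedding planar while increasing the edge count, contradicting maximality. For $k = 4$ the candidate diagonals are $v_1 v_3$ and $v_2 v_4$, and a Jordan-curve argument shows that at most one of them can already be present as an edge lying outside $F$, since two such chords would necessarily cross; the absent diagonal can then be drawn inside the empty face $F$ without meeting any existing edge.

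I expect the chord-insertion step to be the main obstacle, for two reasons. First, one must rule out the degenerate case in which the boundary walk repeats a vertex, which occurs when $G$ fails to be $2$-connected; I would dispose of this by noting that a maximal planar graph on $n \ge 3$ vertices can have neither a bridge nor a cut vertex, for otherwise an edge could be added across it, so that each face boundary is a genuine cycle. Second, for $k \ge 5$ one must still guarantee that some admissible diagonal is nonadjacent, which follows by the same crossing argument applied to a suitable pair of boundary vertices. Since every face is thereby forced to be a triangle, $G$ is a planar triangulation. I would also remark that this direction is in fact immediate from the corollary asserting that every face of a maximal planar graph is a triangle; the explicit chord argument is retained as the self-contained justification of that fact.
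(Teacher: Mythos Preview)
The paper does not supply its own proof of this statement: it is quoted in the introductory survey of known facts with a bare citation to Harary, so there is nothing to compare your argument against line by line.

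Your proof is the standard one and is correct. The forward direction via $2m=3f$, Euler's formula, and the bound $m\le 3n-6$ is exactly the textbook argument. For the converse, your chord-insertion argument is also the usual route; the Jordan-curve step you sketch (that $v_1v_3$ and $v_2v_4$ cannot both lie in the disk complementary to $F$ without crossing) is valid for every $k\ge 4$, not just $k=4$, since on the sphere the complement of $F$ is a disk with $v_1,\dots,v_k$ in cyclic order on its boundary, and any arc $v_1v_3$ in that disk separates $v_2$ from $v_4$. So your ``suitable pair'' for $k\ge 5$ can simply be $v_1v_3$ and $v_2v_4$ again. Your remark that maximality forces $2$-connectedness (hence genuine cycles as face boundaries) is the right way to handle the degenerate case; you might add that $n\ge 3$ is implicitly needed there, and that the paper's own Corollary~3 already records ``every face is a triangle'' as part of the characterisation of maximal planar graphs, which makes the converse immediate once one accepts that corollary.
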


\begin {theorem}
The graph is planar if and only if every its block is planar. It is evident that any planar graph can be lied (stacked) down on the sphere. The opposite is also true. It may be done with the help of different methods \cite {Harary}.
\end {theorem}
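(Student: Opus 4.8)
The plan is to treat the two assertions separately: first the equivalence ``planar if and only if embeddable on the sphere,'' and then ``planar if and only if every block is planar,'' because the sphere result furnishes exactly the tool needed for the block decomposition.

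For the sphere equivalence I would use stereographic projection in both directions. Given a plane embedding of $G$, inverse stereographic projection from the north pole maps the plane homeomorphically onto the sphere minus a single point, carrying the drawing to a crossing-free drawing on the sphere. Conversely, given a drawing of $G$ on the sphere, I would select an interior point of one of its faces, declare that point the north pole, and project stereographically down to the plane; since the chosen point lies on no vertex and no edge, the image is again a crossing-free plane embedding, and the selected face becomes the unbounded one. A byproduct I will record for later use is that, because the north pole may be placed in the interior of \emph{any} face, every face of a given plane embedding can be taken to be the outer (unbounded) face.

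For the block equivalence the forward direction is immediate: planarity is hereditary, so if $G$ is planar then each block, being a subgraph, inherits a plane drawing. For the converse I would induct on the number of blocks, organized by the block--cut-vertex tree. If $G$ consists of a single block, the hypothesis gives the conclusion directly. Otherwise choose a leaf block $B$ of that tree, joined to the remainder of $G$ through a single cut vertex $v$, and let $G'$ be the graph obtained from $G$ by deleting every vertex of $B$ other than $v$. Then $G'$ has strictly fewer blocks, so by induction it is planar; fix a plane embedding of it, in which $v$ lies on the boundary of some face $F$. Embed $B$ separately in the plane and, using the byproduct above, choose its outer face so that $v$ lies on the outer boundary. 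Inserting a suitably shrunk copy of this drawing of $B$ into the face $F$ and identifying the two copies of $v$ then yields a crossing-free plane drawing of all of $G$, since $B$ and $G'$ have only the vertex $v$ in common.

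The step I expect to be the main obstacle is precisely this gluing in the inductive step: one must guarantee that the embedding of $B$ can be seated inside $F$ so that it meets $G'$ exactly at $v$ and nowhere else. This is where the ``any face can be made the outer face'' fact is essential, as it ensures $v$ can be exposed on the outer boundary of the inserted copy of $B$, so that the whole of $B$ fits within $F$ without forcing an edge of $B$ to cross an edge of $G'$. Once the two projection maps and the block-tree induction are in place, the remaining verifications are routine.
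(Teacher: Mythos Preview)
Your argument is correct and is essentially the standard textbook proof: stereographic projection for the plane--sphere equivalence, and induction along the block--cut-vertex tree for the block statement, with the ``any face can be made the outer face'' consequence supplying the gluing step. One small point worth tightening is that the block--cut-vertex structure is a forest rather than a tree when $G$ is disconnected; either reduce to the connected case first or run the induction on a leaf of the forest.

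As for comparison with the paper: there is nothing to compare. The paper does not prove this theorem at all; it is listed in the introductory survey of known results and is simply attributed to Harary with a citation. So your proposal does not differ from the paper's approach so much as supply what the paper deliberately omits. If anything, your outline is exactly the proof one finds in Harary's book, which is the reference the paper points to.
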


\begin {theorem}[Pontriagin--Kuratovskiy theorem] 
The necessary and sufficient conditions at which graph $G$ appears to be the planar graph consists in the fact that the graph cannot contain the partial subgraphs (fig. \ref {Fig:1}) of type either $1$ or $2$ \cite {Berge}.
\end {theorem}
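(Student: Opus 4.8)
The plan is to prove both implications of this equivalence, understanding a ``partial subgraph of type $1$ or $2$'' to mean a subgraph homeomorphic to $K_5$ (five mutually adjacent vertices) or to $K_{3,3}$ (the complete bipartite graph on two triples), i.e.\ a subdivision of one of these two graphs. The necessity direction is routine: first I would verify that neither $K_5$ nor $K_{3,3}$ is planar. For $K_5$ we have $n=5$, $m=10$, which violates the bound $m\le 3n-6=9$ from the corollary above; for $K_{3,3}$, which is triangle--free and doubly connected, we have $n=6$, $m=9$, violating $m\le 2n-4=8$ from the same corollary. Since a subgraph of a planar graph is planar, and since subdividing or suppressing vertices of degree $2$ preserves planarity, any graph containing a subdivision of a non--planar graph is itself non--planar. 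Hence a planar $G$ can contain neither configuration.

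All the work is in the sufficiency direction: if $G$ contains no subdivision of $K_5$ or $K_{3,3}$, then $G$ is planar. I would argue by contradiction, taking $G$ to be a counterexample with the fewest edges --- non--planar, yet free of both forbidden configurations --- and deriving that $G$ must after all be planar. The first reduction is to raise the connectivity of $G$. If $G$ is disconnected or has a cut vertex, then by the theorem above (a graph is planar if and only if each of its blocks is planar) it suffices to embed each block; each block has fewer edges and inherits freedom from the forbidden configurations, so by minimality each block is planar, and their embeddings combine into one of $G$, a contradiction. If $G$ is $2$--connected but has a separating pair $\{u,v\}$, I would split $G$ along $\{u,v\}$, add the virtual edge $uv$ to each resulting piece, and observe that each augmented piece is smaller and still free of the forbidden configurations (any subdivision through the virtual edge could be rerouted through a $u$--$v$ path in the complementary piece, producing a forbidden subgraph of $G$). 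By minimality each piece is planar with $uv$ on its outer boundary, and the two embeddings glue along $uv$ --- again a contradiction. Thus the minimal counterexample $G$ is $3$--connected.

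For the $3$--connected case I would use the standard lemma that every $3$--connected graph on at least five vertices possesses an edge $e=xy$ whose contraction $H=G/e$ is again $3$--connected. One checks that $H$ inherits freedom from the forbidden configurations (a forbidden subdivision in $H$ would give rise to one of the two forbidden configurations back in $G$), so by minimality $H$ is planar; being $3$--connected and planar, it has, by Whitney's uniqueness theorem, an essentially unique embedding in which the faces are exactly the non--separating induced cycles. I would then attempt to recover an embedding of $G$ by ``splitting'' the contracted vertex $v_{xy}$ back into the edge $xy$: let $C$ be the facial cycle of $H-v_{xy}$ bounding the face that contained $v_{xy}$, so that all neighbours of $x$ and of $y$ lie on $C$. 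If these neighbours can be arranged along $C$ so that those of $x$ form one arc and those of $y$ another, sharing only their common endpoints, then $x$ and $y$ can be reinserted and $G$ embeds, contradicting non--planarity.

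The main obstacle, and the genuine content of the theorem, is the remaining case, where the neighbours of $x$ and of $y$ interleave along $C$ in a way that blocks every such splitting. Here I expect a short case analysis: either three neighbours common to $x$ and $y$ alternate with the private ones so as to force five mutually joinable branch vertices, yielding a subdivision of $K_5$; or the private neighbours of $x$ and $y$ separate one another cyclically, yielding a subdivision of $K_{3,3}$ formed from the arcs of $C$ together with the paths through $x$ and $y$. In every sub--case the obstruction to embedding must be shown to manifest as one of the two forbidden configurations, contradicting our standing assumption on $G$. Carrying out this alternation analysis carefully --- and verifying that the paths it produces are genuinely internally disjoint, so that they constitute a true subdivision --- is where the proof concentrates its difficulty; the connectivity reductions that precede it, while indispensable, are comparatively mechanical.
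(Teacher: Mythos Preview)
The paper does not actually prove this theorem: immediately after the statement it says only ``The proof can be found in the book of Berge~\cite{Berge}.'' There is therefore no argument in the paper to compare your proposal against. Your outline is a correct sketch of the modern proof via edge contraction in a $3$--connected minimal counterexample (Thomassen's approach); Berge's 1958 text predates this and follows a line closer to Kuratowski's original 1930 argument, so in that indirect sense your route differs from the one the paper cites. One caution on your sketch: the parenthetical ``a forbidden subdivision in $H=G/e$ would give rise to one of the two forbidden configurations back in $G$'' is not entirely routine --- when the contracted vertex $v_{xy}$ appears as a degree--$4$ branch vertex of a $K_5$ subdivision in $H$, uncontacting may split its four incident subdivision paths $2$--$2$ between $x$ and $y$, and one then needs a short argument (using the edge $xy$) to exhibit a $K_{3,3}$ subdivision in $G$ rather than a $K_5$. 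This is standard but deserves more than a parenthesis; the rest of your outline is sound.
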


\begin{figure}[htb]
		\includegraphics[width=0.5\textwidth]{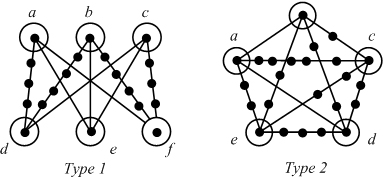}
	\caption{Partial subgraphs of two types}
	\label{Fig:1}
\end{figure}

The proof can be found in the book of Berge \cite {Berge}.

\begin {theorem}[Harary--Tutte theorem]
It is said that the graph $L=(X,U,P)$ satisfies the condition of Harary--Tutte, if it is impossible to turn it's skeleton into either the graph $F_5$, or into the graph $K_{3,3}$ (fig. \ref{Fig:1}) \cite {Harary}.
\end {theorem}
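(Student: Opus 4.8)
The statement is phrased as a definition, but its mathematical content is the assertion that the Harary--Tutte condition characterizes planarity, and is therefore equivalent to the Pontriagin--Kuratovskiy condition stated just above. My plan is to establish this equivalence in both directions, relying on the Euler-formula corollaries for the easy half and on the interplay between subdivisions and contractions for the hard half.

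First I would settle necessity. If $L$ is planar then every graph obtained from its skeleton by deleting and by contracting edges is again planar, since neither operation can destroy a given plane embedding. It therefore suffices to observe that $F_5$ and $K_{3,3}$ are themselves not planar. For $F_5$ this follows from the corollary $m\le 3n-6$ with $n=5$, $m=10$, which would force $10\le 9$; for $K_{3,3}$ it follows from the corollary $m\le 2n-4$ for doubly connected triangle-free graphs with $n=6$, $m=9$, which would force $9\le 8$. Hence a planar $L$ can be contracted to neither graph, so it satisfies the Harary--Tutte condition.

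For sufficiency I would route through the Pontriagin--Kuratovskiy theorem and prove that the contraction condition and the subdivision condition coincide. One implication is immediate: a subdivision of $F_5$ or of $K_{3,3}$ sitting inside the skeleton is turned into $F_5$ or $K_{3,3}$ simply by contracting each subdivided edge back to a single edge, so a Kuratowski subgraph at once yields a contraction to a forbidden graph. The reverse implication --- that a skeleton contractible to $F_5$ or $K_{3,3}$ already contains a subdivision of one of them --- is where the work lies. Here I would pass to the branch sets of the contraction, the connected vertex sets that collapse onto the five (respectively six) model vertices, and track the degrees at which neighbouring branch sets meet. For $K_{3,3}$ the argument closes cleanly: its trivalent model forces, from each branch set, three internally disjoint paths to the three branch sets of the opposite part, and these paths assemble directly into a $K_{3,3}$ subdivision.

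The main obstacle is the $F_5$ case. A contraction to $F_5$ gives five pairwise-adjacent branch sets, but a single branch set may reach the other four through attachment vertices that do not all coincide, so there need be no vertex of degree $4$ serving as a genuine branch vertex of an $F_5$ subdivision. In that event one cannot produce an $F_5$ subdivision and must instead extract a $K_{3,3}$ subdivision from the same configuration by rerouting the connecting paths through the interior of the branch sets. Proving that this rerouting always succeeds --- that failure to realize an $F_5$ subdivision forces a $K_{3,3}$ subdivision --- is the delicate structural step. Once it is in place, chaining it with the easy implication and with the Pontriagin--Kuratovskiy theorem gives ``$L$ non-planar $\iff$ the skeleton contains an $F_5$ or $K_{3,3}$ subdivision $\iff$ the skeleton is contractible to $F_5$ or $K_{3,3}$'', which is precisely the Harary--Tutte characterization.
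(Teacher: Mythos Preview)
The paper does not prove this statement at all: it is stated as a theorem, attributed to Harary via the citation, and the text then moves directly on to the next section. There is therefore no argument in the paper to compare your proposal against.

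That said, your proposal is the standard route to the result (often called Wagner's theorem) and is correctly structured. The necessity direction is exactly as you describe, using that minors of planar graphs are planar together with the Euler-formula corollaries already quoted in the paper. For sufficiency, reducing to Kuratowski via the equivalence of the minor and topological-minor conditions for $\{F_5,K_{3,3}\}$ is the canonical argument: the $K_{3,3}$ case is immediate because a graph of maximum degree~$3$ occurs as a minor if and only if it occurs as a topological minor, and you have correctly isolated the one genuinely nontrivial point, namely that an $F_5$ minor which fails to yield an $F_5$ subdivision must instead yield a $K_{3,3}$ subdivision. You flag this as ``the delicate structural step'' without carrying it out; if you want a self-contained proof you would need to supply it (the usual argument picks a branch set whose attachment vertices to the other four branch sets are not all equal, finds two that are distinct, and uses the tree structure inside the branch set together with the remaining three branch sets to assemble a $K_{3,3}$ subdivision). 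With that lemma in hand your chain of equivalences is exactly right and goes well beyond anything the paper itself offers.
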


For more understanding of the behaviour of the dual planar graphs it will be necessary to examine their basic properties and also their numeric properties in order to determine the interrelations between them for the comparable graphs. The main of the properties are: the number of the vertexes, the number of the edges, the chromatic number, the cyclomatic number and the chromatic class. 

Also it will be very interesting to examine the properties of the dual matrixes in order to determine their form and size.

So, let us introduce the conjugated planar triangulation. 

\section{Properties of the planar triangulation and the conjugated planar triangulation}

Let us examine the planar triangulation and present some new theorems, because the solution of the Four Color Problem for the maximal graph $L$ means its solution for an arbitrary planar graph. 

It is well known that the planar triangulation possesses a number of important properties:

\begin{itemize}
\item 
The planar graph $L$ has $\nu(L)$ finite faces and one infinite face.
\item 
It is impossible to add to graph $L$ such an edge, which will not traverse some other edge, or which will not coincide with one or another already existing edge, or which will not cut off one more finite face from the infinite face.
\item 
The addition of the planar graph $M$ to the planar triangulation $L$ (fig. \ref{Fig:2}) by means of the new edge's insertion indicates the adding of the restrictions, which does not reduce the cyclomatic number of graph $L$ in comparison with the cyclomatic number of graph $M$.
\end{itemize}

And starting from this moment we'll begin to examine the planar triangulation together with its dual graph --- the conjugated planar triangulation.

Let us introduce the planar conjugated triangulation, which appears to be a graph dual to the planar triangulation. Graph $H(Q,R,\Phi)$ (fig. \ref{Fig:2}) appears to be the graph of the medians' faces of the triangulation $L$. Graph $H$ appears to be the first conversion of the initial graph $L$ \cite {Malinin}.

\begin{figure}[htb]
	\includegraphics[width=0.6\textwidth]{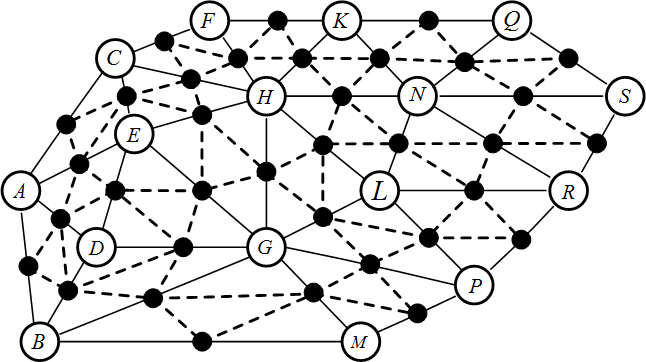}
	\caption{Graph $L$ --- solid line, graph $H$ --- dotted line}
	\label{Fig:2}
\end{figure}

For the purpose of constructing graph $H(Q,R,\Phi)$, or simply speaking, graph $H$, let us mark the centers of all the edges of graph $L$. We will accept these points as the vertexes $q_i$ of graph $H$. Let us connect them with the help of the medians of graph $L$ (fig. \ref{Fig:2}, \ref{Fig:3}). So, inside each face $f_l$ of the graph $L$ there will be created the triangle $f_h^{(1)}$. The triangles will be connected between themselves with the help of the vertexes (fig. \ref{Fig:3}).

\begin{figure}[htb]
	\includegraphics[width=0.6\textwidth]{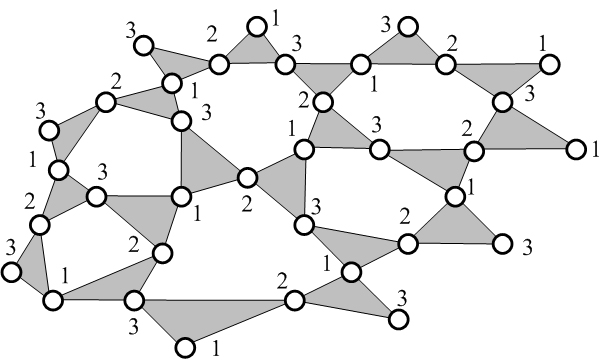}
	\caption{Graph $H$ --- the planar conjugated triangulation}
	\label{Fig:3}
\end{figure}

The construction of graph $H$ may be considered in another way. Let us construct graph $H$ as an adjacency edge graph of the $L$ graph. But in the received graph $H$ only those edges will remain, which are situated inside one $f_l$ face of graph $L$.

The received graph will be the sought--for graph $H$. 
The adjacency edge graphs are frequently named as conjugated graphs, that's why let us agree to name the graph $H$ as a planar conjugated triangulation. Graph $H$ presented in fig. \ref{Fig:2} with the help of the chain lines, and in fig. \ref{Fig:3} with the help of thin lines. The triangle faces of graph $H$ are filled (fig. \ref{Fig:3}).

Let us examine the properties of graph $H$ and prove them as theorems.

\begin{theorem}
The degrees of all the $H$ graph's vertexes are even and are equal either to 2 or 4.
\end {theorem}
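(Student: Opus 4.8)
The plan is to read off the degree of each vertex of $H$ directly from the construction, using the elementary planarity fact that every edge of a connected planar graph borders exactly two faces. First I would fix the basic dictionary supplied by the construction: each vertex $q_i$ of $H$ is the midpoint of a unique edge $e_i$ of $L$, and the edges of $H$ are exactly the sides of the triangles $f_h^{(1)}$, one such triangle being inscribed in each finite (triangular) face $f_l$ of $L$ by joining the midpoints of its three sides. Hence any edge of $H$ incident to $q_i$ must be a side of some triangle $f_h^{(1)}$ inscribed in a finite face of $L$ that has $e_i$ among its sides.

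Next I would carry out the local count at a single vertex $q_i$. Consider a finite triangular face $f_l$ of $L$ whose boundary contains the edge $e_i$. Inside $f_l$ the triangle $f_h^{(1)}$ joins the three edge-midpoints, and exactly two of its three sides meet at $q_i$; so this face contributes precisely $2$ to the degree of $q_i$. Two distinct finite faces that share the edge $e_i$ share only that edge, so their inscribed triangles meet only in the single common vertex $q_i$, and their contributions to the degree are disjoint and simply add. Therefore $\deg_H(q_i)=2k$, where $k$ is the number of finite faces of $L$ incident to $e_i$, and in particular every degree is even.

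It then remains to bound $k$. Here I would invoke the standard planarity fact (consistent with Theorem 1 and the Euler relation of the corollary) that in the connected planar graph $L$ each edge lies on the boundary of exactly two faces. Since $L$ is a triangulation, these are either two finite triangles or one finite triangle together with the single infinite face. The infinite face is not one of the triangular faces $f_l$ and so carries no inscribed triangle $f_h^{(1)}$, contributing nothing to the degree. Consequently $k=2$ for an interior edge $e_i$ and $k=1$ for an edge on the outer boundary, giving $\deg_H(q_i)=4$ and $\deg_H(q_i)=2$ respectively, which is exactly the claim that every degree is even and lies in $\{2,4\}$.

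The step I expect to be the crux is this last one: correctly accounting for the infinite face. The evenness and the value $4$ at interior edges are immediate from the symmetric picture of two median triangles meeting at a midpoint; the value $2$ appears only because the infinite face, not being a triangle, produces no median triangle, so boundary edges must be handled separately rather than assuming every edge borders two finite faces. A secondary point worth verifying is that the two inscribed triangles at an interior edge genuinely share no second vertex, so that no edge of $H$ is double-counted and the degree is honestly $4$ rather than smaller.
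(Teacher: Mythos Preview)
Your proposal is correct and follows essentially the same approach as the paper: both arguments classify a vertex $q_i$ of $H$ according to whether the corresponding edge $e_i$ of $L$ separates two finite faces or lies on the infinite face, and then count the incident medians (two per adjacent finite face) to obtain degree $4$ or $2$. Your write-up is simply a more careful and explicit version of the paper's terse case split, including the verification that contributions from distinct inscribed triangles are disjoint.
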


\begin {proof}
There may be two cases (fig. \ref{Fig:2}, \ref{Fig:3}).
\begin{enumerate}
\item	
The graph's $H(V,Q)$ vertex $v_h$ belongs to one graph's $L$ face in case if the edge $v_h$ of graph $L(X,V)$, to which it is corresponding, is a part of the infinite face of graph $L$.
\item	
The graph's $H(V,Q)$ vertex $v_h$ belongs to two graph's $L$ face in case if the corresponding edge $v_h$ of the graph $L(X,V)$ divide the two finite faces of graph $L$.
\end{enumerate}
	
In the first case two medians of the finite face are incident to the vertex $v_h$; in another case four medians are incident to the vertex $v_h$. That is why the degrees of every graph's $H(V,Q)$ vertex $v_h$ can have only two values: either 2, or 4. The theorem is proved.
\end {proof}

The theorem can also be proved another way.

\begin {proof}
Each graph's $H(V,Q)$ vertex $v_h$ belongs either to one graph's $L$ face, if the edge $v_h$ of graph $L(X,V)$, to which it is corresponding, is also a part of the infinite face of graph $L$, or it belongs to two faces, if the corresponding edge divides two finite faces (fig. \ref{Fig:2}, \ref{Fig:3}). In the first case two medians of the finite face are incident to the vertex $v_h$; in another case four medians are incident to the vertex $v_h$. That is why the degrees of the graph's $H(V,Q)$ vertex $v_h$ can have only two values: either 2, or 4. The theorem is proved.
\end {proof}

\begin{theorem}
The planar triangulation $H$ contains the Euler circuit (contour), which goes exactly twice times along all the inner vertexes $q_i \in H$.
\end {theorem}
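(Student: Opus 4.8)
The plan is to reduce the statement to the classical Eulerian criterion and then to a single connectivity argument. First I would invoke the preceding theorem, which guarantees that every vertex $q_i$ of $H$ has even degree: the degree is $4$ when the underlying edge of $L$ separates two finite faces (an inner vertex), and the degree is $2$ when that edge lies on the infinite face of $L$. Thus the parity hypothesis needed for an Euler circuit is already established, and the only missing ingredient is the connectedness of $H$.

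Next I would show that $H$ is connected. The key observation is structural: $H$ is the union of the median triangles $f_h^{(1)}$, one inscribed in each finite face $f_l$ of $L$, and two such triangles share a vertex of $H$ precisely when the corresponding faces of $L$ are separated by a common edge (whose midpoint is the shared vertex). Hence the incidence pattern of the triangles of $H$ faithfully reproduces the face-adjacency structure of the triangulation $L$. Since $L$ is a maximal planar graph, it is connected, so any two of its finite faces can be joined by a chain of finite faces in which consecutive members share an edge; translating such a chain into $H$ links the corresponding triangles, and with them all their vertices. Consequently $H$ is connected.

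With connectedness and even degrees in hand, the classical theorem of Euler yields a closed walk in $H$ that traverses every edge exactly once, namely the required Euler circuit. It then remains to count the passages through the vertices. In any Euler circuit each passage through a vertex consumes exactly two of the edges incident to it, so the number of passages equals half the degree of the vertex. For an inner vertex this number is $4/2 = 2$, while for a vertex lying on the infinite face it is $2/2 = 1$. Therefore the circuit passes exactly twice through every inner vertex $q_i \in H$ (and once through each boundary vertex), which is exactly the assertion of the theorem.

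I expect the main obstacle to be the connectivity step: one must argue carefully that the gluing of the median triangles along their shared midpoints genuinely mirrors the connectedness of the face structure of $L$, rather than merely asserting it, since it is this fact — together with the even-degree property already proved — that makes Euler's criterion applicable. The remaining parts are a direct application of Euler's circuit theorem and the elementary half-degree count for the number of traversals.
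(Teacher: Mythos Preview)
Your proposal is correct and follows essentially the same route as the paper: invoke the preceding even-degree theorem and then apply Euler's classical criterion to obtain the circuit, with the ``twice'' claim coming from the fact that inner vertices have degree~$4$. The only difference is one of completeness rather than strategy: the paper simply cites Euler's theorem and the degree information and declares the result proved, whereas you additionally supply the connectivity argument (via the face-adjacency structure of $L$) and spell out the half-degree passage count, both of which the paper leaves implicit.
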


\begin {proof}
The proof evidently comes from the Euler theorem \cite {Berge}: graph possesses the Euler circuit if and only if it is connected and the number of the vertexes with the odd degrees is equal either to 0 or 2. 

All the vertexes of the planar conjugated triangulation have the even degrees of the vertexes. In addition, all the inner vertexes have the degree, equal to four. The theorem is proved. Let us denote such Euler circuit as the Bi--Euler circuit.
\end {proof}

\begin{corollary}
The edges $r_i \in H$ of the planar conjugated triangulation may be orientated in the direction of passing the Euler circuit.
\end {corollary}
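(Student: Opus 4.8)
The plan is to derive the orientation directly from the Euler circuit guaranteed by the preceding theorem, so the corollary is essentially a bookkeeping consequence rather than a fresh argument. First I would invoke that theorem: the planar conjugated triangulation $H$ carries a Bi--Euler circuit $C$, that is, a closed walk traversing every edge $r_i \in H$ exactly once, while visiting each inner vertex (of degree $4$) exactly twice and each boundary vertex (of degree $2$) once. Fixing one of the two possible senses of traversal of $C$ turns it into a directed closed walk.

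Next I would use this directed walk to orient the edges. As one moves along $C$ from a chosen starting vertex, each step carries us from one endpoint of an edge to the other; I would assign to that edge the direction in which it is crossed. Because $C$ is an Euler circuit, every edge of $H$ occurs in $C$ exactly once, so this rule assigns to each $r_i$ one and only one direction. This is the crux of the matter: well--definedness of the orientation is precisely equivalent to the ``each edge used exactly once'' property of an Euler circuit, which the previous theorem supplies. The fact that the inner vertices are visited more than once is harmless, since the orientation is attached to edges and not to vertex visits.

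Finally I would record the structural consequence. Since $C$ enters and leaves every vertex the same number of times, the induced orientation is balanced: at each vertex the in--degree equals the out--degree, namely $1$ at the boundary vertices and $2$ at the inner vertices. Thus $H$ becomes an Eulerian digraph whose arcs all point ``in the direction of passing the Euler circuit,'' which is exactly the assertion to be proved. I do not expect a genuine obstacle here; the only point demanding care is to state explicitly that the orientation is indexed by edges rather than by the repeated vertex visits, so that the degree--$4$ inner vertices introduce no ambiguity.
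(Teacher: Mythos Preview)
Your proposal is correct and follows exactly the approach the paper intends: the corollary is stated without proof, as an immediate consequence of the preceding theorem on the Bi--Euler circuit, and your argument simply spells out that consequence. If anything, you are more careful than the paper, since you make explicit the well--definedness point (each edge is traversed exactly once) and the balanced in/out--degree observation that the paper later relies on when constructing the matrices $R$ and $F$.
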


\begin{theorem}
The finite graph's $H(V,Q)$ faces $f_h \in H$ put together two and only two subsets of the elements (or two classes of the equivalence):

\begin{enumerate}
\item	
The first subset of the faces, notably $\{f_h^{(1)}\}$, includes the faces, the boundary of which is made up from the medians of one face of graph $L(X,V)$. These faces are situated inside the faces $f_l \in L$. All these faces $f_h^{(1)}$ are the triangles.
\item	
The second subset of the faces, notably $\{f_h^{(2)}\}$ includes the faces, inside which the vertexes $q_i \in L$ of the initial graph are situated. All these faces have the polygon form with the number of the sides: $n \ge 3$.
\end{enumerate}
\end {theorem}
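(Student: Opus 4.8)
The plan is to reduce the global statement to a purely local analysis inside a single triangular face of $L$ and then to reassemble the pieces. Recall that every face $f_l$ of the maximal planar graph $L$ is a triangle (by the earlier theorem identifying maximal planar graphs with triangulations), that the vertices of $H$ are the midpoints of the edges of $L$, and that the medians of $H$ are exactly the segments joining the midpoints of two edges belonging to a common face $f_l$. Thus inside each triangle $f_l$ with vertices $A,B,C$ and edge--midpoints $P,Q,R$, the three medians $PQ,QR,RP$ form the medial triangle, which subdivides $f_l$ into four regions: the central medial triangle $PQR$ and three corner regions, one adjacent to each of $A,B,C$. First I would record the crucial incidence fact: the three sides of the central triangle are medians, hence edges of $H$, whereas the two sides of each corner region that run from a vertex of $L$ to a midpoint are halves of edges of $L$ and are therefore \emph{not} edges of $H$.

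From this local picture the two classes emerge as follows. The central triangle $PQR$ is bounded entirely by medians of the single face $f_l$ and contains no vertex of $L$; these are precisely the faces $f_h^{(1)}$, and each is a triangle by construction, giving the first subset. For the second subset I would fix a vertex $x$ of $L$ of degree $d$, list its incident faces $f_1,\dots,f_d$ in cyclic order, and observe that the corner regions these faces contribute near $x$ are pairwise separated only by the segments running from $x$ to the midpoints of the edges at $x$ --- segments which, by the incidence fact above, carry no edge of $H$. Consequently all $d$ corner regions around $x$ coalesce into a single face of $H$ whose boundary is the closed chain of $d$ medians joining consecutive edge--midpoints around $x$. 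This face contains the vertex $x$ in its interior and is a polygon with $n=d$ sides; these are the faces $f_h^{(2)}$.

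Finally I would argue exhaustiveness and disjointness. Every finite point of the plane lies in some triangle $f_l$, hence in either its central triangle or one of its corner regions, so every finite face of $H$ is of type $f_h^{(1)}$ or of type $f_h^{(2)}$; the two types are disjoint because a type--$1$ face contains no vertex of $L$ while a type--$2$ face contains exactly one, and both are nonempty since $L$ has a triangular face and, in the situation under study, an interior vertex. The point demanding the most care --- and what I expect to be the main obstacle --- is the boundary behaviour: for a vertex of $L$ lying on the infinite face the surrounding chain of medians is not closed, so its corner regions merge with the infinite face of $H$ rather than forming a finite polygon. Handling this correctly shows that the finite faces $f_h^{(2)}$ correspond exactly to the \emph{interior} vertices of $L$, whose degree is at least $3$, which is precisely what forces $n\ge 3$ in the statement; one must also check that the medial triangles of the boundary faces of $L$ remain genuine finite faces, which they do since all three of their sides are medians.
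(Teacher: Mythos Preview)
Your argument is correct and follows essentially the same route as the paper: identify the medial triangle inside each face of $L$ as the faces $f_h^{(1)}$, assemble the ``corner'' pieces around each interior vertex $x$ into a $\rho(x)$-gon $f_h^{(2)}$, and observe that boundary vertices of $L$ fail to produce a closed chain of medians so their corner regions fall into the infinite face. Your version is in fact somewhat more careful than the paper's, since you make the exhaustiveness and disjointness of the two classes explicit via the local four-region decomposition, whereas the paper leaves that step implicit.
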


\begin {proof}
By the construction inside each face $f_l$ from $L(X,V)$ the face $f_h^{(1)}$ of graph $H$ is formed. These faces are formed by the medians of the $f_l$ faces.

If the vertex $x_i$ from $L(X,V)$ has the degree $\rho(x_i)$, then in this vertex the vertexes of the $\rho(x_i)$ faces of the $L(X,V)$ are concentrated. Inside each face $f_l$ of graph $L$ the face $f_h^{(1)}$ is situated. Its one side is directed to the $x_i$ vertex (the vertex of the initial graph $L$). The sides, which are gathered out of the faces $f_h^{(1)}$, generate the cycle, which form the boundary of the polygon face of graph $H$, inside which one vertex of graph $L(X,V)$ is situated. The number of the sides of the $f_h^{(1)}$ faces, which compose the self-contained cycle around the $x_i$ vertex is equal to the degree of the vertex, notably $\rho(L)$. So, these faces are $\rho$--angle; the number of the sides in each of them is equal to $\rho(L)$.

On the other hand, the planar conjugated triangulation has both the inner vertexes and the external vertexes. The external vertexes are those ones, which enter the boundary of the infinite face. The closed circuit out of the finite faces' medians cannot be generated around such vertexes (fig. \ref{Fig:4}). But the closed circuit out of the medians is simply generated around each inner vertex.

\begin{figure}[htp]
		\includegraphics[width=0.25\textwidth]{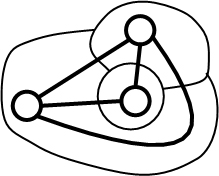}
	\caption{The arbitrary planar triangulation}
	\label{Fig:4}
\end{figure}

None of the inner vertexes of the triangulation $L$ can have the degree $\rho(x_i) \le 3$. That's why all the faces of the second subset $f_h^{(2)}$ have a number of the sides: $n_{\varphi}^{(2)} \ge 3$. The theorem is proved.
\end {proof}

\begin{theorem}
Two faces of the first subset $\{f_h^{(1)}\}$ can have one shared vertex, but cannot have shared edges (fig. \ref{Fig:3}).
\end {theorem}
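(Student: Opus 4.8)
The plan is to exploit the defining property that every face of the first subset is the medial triangle of a \emph{single} face of $L$: its three vertices are the midpoints of the three edges of one triangle $f_l \in L$, and its three edges are the three midlines (medians) running between these midpoints through the interior of $f_l$. Thus a vertex of a face $f_h^{(1)}$ records an \emph{edge} of $L$, namely the one whose midpoint it is, while an edge of $f_h^{(1)}$ records an unordered \emph{pair of incident edges} of the same face $f_l$. Reformulating the sharing of vertices and edges of $H$ in these terms is the whole content of the argument.

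First I would establish the combinatorial key fact: two distinct faces of the triangulation $L$ share at most one edge. Indeed, suppose two triangular faces shared two edges $e$ and $e'$ of $L$. Since any two edges of a triangle are incident, $e$ and $e'$ meet in a common vertex $x$, so their endpoints are $a$, $x$, $b$ altogether; but a triangular face containing both $e = xa$ and $e' = xb$ is necessarily the triangle on $\{a, x, b\}$ and is therefore uniquely determined, contradicting that the two faces are distinct. Here I use that $L$, being a planar triangulation, is a simple maximal planar graph (Theorem on maximal planar graphs above), so there are no multiple edges between a pair of vertices.

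From this fact both halves of the statement follow at once. A common vertex of the two medial triangles is the midpoint of an edge of $L$ belonging simultaneously to the faces $f_l$ and $f_l'$; since these faces share at most one edge, the two medial triangles share at most one vertex --- and exactly one precisely when $f_l$ and $f_l'$ are adjacent across a common edge of $L$, which establishes that a shared vertex is possible. On the other hand, a common edge of the two medial triangles would be a midline joining the midpoints of two edges of $L$ that both lie on $f_l$ and on $f_l'$; this would force $f_l$ and $f_l'$ to share two edges, which the key fact forbids. Hence the two faces cannot have a shared edge.

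The same conclusion can be seen topologically, and I would record this as a cross-check: each midline of $f_h^{(1)}$ lies in the open interior of the corresponding face $f_l$, whereas distinct faces of a planar graph have disjoint interiors, so the edges of two different medial triangles occupy disjoint regions and cannot coincide; only their endpoints, which sit on the edges of $L$, can be common. The main obstacle is not any long computation but the careful bookkeeping in the key fact above: one must invoke the simplicity of the triangulation to exclude multiple edges, and one must note that distinct edges of $L$ have distinct midpoints, so that ``two shared midpoints'' genuinely translates into ``two shared edges of $L$.'' Once this is in place the theorem is immediate.
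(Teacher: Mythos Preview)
Your proof is correct. In fact, what you record as a ``cross-check'' is essentially the entire content of the paper's own argument: the paper simply observes that each face $f_{hi}^{(1)}$ is situated inside the face $f_{li}$ of $L$, so two such faces, lying in the interiors of distinct faces of $L$, cannot share an edge; and they share a vertex exactly when the corresponding faces of $L$ share an edge.

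Your primary route is genuinely different. You translate the question into combinatorics on $L$ --- vertices of a medial triangle correspond to edges of $L$, edges of a medial triangle correspond to pairs of incident edges within one face of $L$ --- and then invoke the key fact that two distinct triangular faces of a simple planar triangulation share at most one edge. This buys you something the paper's proof leaves implicit: you actually establish that the two medial triangles share \emph{at most one} vertex, not merely that a shared vertex is possible. It also makes the argument independent of the particular geometric embedding (midpoints, midlines), relying only on the incidence structure. The paper's approach, by contrast, is shorter and leans directly on the geometric construction of $H$ as medians drawn inside the faces of $L$; it needs no auxiliary lemma about faces of $L$ sharing edges, because disjointness of interiors does the work immediately.
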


\begin {proof}
Graph $L$ is the planar graph. Graph $H(V,Q)$ is also the planar graph. It has $f_h$ finite faces and one infinite face. Let us also denote the vertexes $v_h$, which enter the boundary of the infinite face as the external vertexes, and the other vertexes   as the inner ones.

Each face $f_{hi}^{(1)}$ of the first subset is situated in the face $f_{li}$ of graph $L$. That's why it cannot have the shared edge with neither the face $f_{hk}^{(1)}$ of graph $H$, which is situated inside any face $f_{ls}$ of graph $L$. But the face $f_{hi}^{(1)}$ can have the shared vertex with the face $f_{hi}^{(1)}$ if the corresponding faces of graph $L$ have the shared edge (as constructed). The theorem is proved.
\end {proof}

\begin{theorem}
Two faces of the second subset $\{f_h^{(2)}\}$ can have one shared vertex, but they can have no shared edges (fig. \ref{Fig:3}).
\end {theorem}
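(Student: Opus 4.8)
The plan is to mirror the proof of the preceding theorem about the first subset, but to replace the role played by the faces of $L$ with the role played by the vertices of $L$. I would first make the correspondence precise: by the construction and by the earlier subset theorem, each face $f_h^{(2)}$ of the second subset surrounds exactly one inner vertex $x_i \in L$, and its boundary is the closed cycle assembled from the sides of the medial triangles $f_h^{(1)}$ that cut off the corner at $x_i$, one such side coming from each face $f_l \in L$ incident to $x_i$. Consequently the vertices of the polygon $f_h^{(2)}$ are precisely the midpoints $q \in H$ of those edges of $L$ that are incident to $x_i$, and the polygon is a $\rho(x_i)$-gon. I would record this explicitly before doing anything else.

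The central observation, which I would establish next, is that every edge $r \in H$ is a side of exactly one medial triangle and therefore cuts off exactly one corner of the unique triangular face $f_l$ in which it lies; hence it is associated with exactly one vertex of $L$. Indeed, drawing the three medians inside a face $f_l$ splits $f_l$ into the central medial triangle (a first-subset face) and three corner regions, each belonging to the second-subset polygon around its respective corner vertex. Thus each edge of $H$ separates one first-subset face from \emph{exactly one} second-subset face. From this it follows at once that an edge lying on the boundary of $f_h^{(2)}$ around $x_i$ cannot simultaneously lie on the boundary of $f_h^{(2)}$ around any other vertex $x_j \neq x_i$, so two distinct faces of the second subset can share \textbf{no} edge.

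For the remaining half of the statement, that two such faces may nevertheless share a single vertex, I would argue that if $x_i$ and $x_j$ are adjacent in $L$, then the midpoint $q$ of the edge $x_i x_j$ is a vertex of the polygon around $x_i$ and also a vertex of the polygon around $x_j$, since $q$ is the midpoint of an edge incident to both. This exhibits a common vertex; moreover, because the midpoint of an edge $e'$ lies on the polygon around $x_k$ only when $e'$ is incident to $x_k$, any common vertex forces $e'$ to join $x_i$ and $x_j$, and since the triangulation is a simple graph this shared vertex is unique. Hence two second-subset faces can share exactly one vertex but no edge.

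I expect the main obstacle to be the bookkeeping in the central observation: one has to verify carefully that each median segment genuinely bounds one first-subset face and one second-subset face, rather than two faces of the same subset. This rests entirely on the local partition of every triangular face $f_l$ into its central medial triangle and its three corner pieces. Once this local picture inside a single face of $L$ is pinned down, the global conclusion about shared edges and the shared-vertex claim both become routine.
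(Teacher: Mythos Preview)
Your argument is correct and, for the no-shared-edge half, is essentially the paper's own proof made explicit: the paper simply observes that the boundary of any second-subset face is assembled from sides of first-subset triangles and concludes at once, whereas you spell out the local partition of each $f_l$ into the central medial triangle and three corner pieces that underlies this step. For the shared-vertex half the paper takes a slightly different tack: rather than using the bijection between vertices of $H$ and edges of $L$, it classifies the vertices of $H$ by degree and notes that at every inner (degree-$4$) vertex exactly two first-subset faces and hence exactly two second-subset faces converge. Your route via the midpoint of the edge $x_ix_j$ is a bit more direct and, as a bonus, gives the uniqueness of the common vertex (since $L$ is simple), a point the paper's argument does not address.
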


\begin {proof}
The validity of the theorem comes from the previous theorem. Indeed, the boundary of any face $f_{hk}^{(2)}$ of the second subset forms the sides of the faces $f_{hj}^{(1)}$ of the first subset. So, by no means two faces $f_{hk}^{(2)}$ and $f_{hj}^{(2)}$ can have a shared edge.

In each external vertex $v_hi$ of graph $H(V,Q)$, which has the degree $\rho(v_i )=2$, accurately meet a vertex of some face $f_h^{(1)}$ of the first subset and some vertex of the infinite face. In every external vertex $v_hi$, which has the degree $\rho(v_i )=4$, the vertexes of two faces of the first subset $\{f_{hi}^{(1)}\}$, of one face of the second subset $\{f_{hi}^{(2)}\}$ and of the infinite face $f_h^{\propto}$ are converged.

In each inner vertex $f_{hi}^{(2)}$ of graph $H(V,Q)$ exactly converge the vertexes of two faces of the $\{f_{hi}^{(1)}\}$ set and, therefore, of two faces of the $\{f_{hi}^{(2)}\}$ set. So, each inner vertex of graph $H(V,Q)$ is a shared vertex for some two faces of the second subset $\{f_{hi}^{(2)}\}$ of graph $H$. The theorem is proved.
\end {proof}

\begin{theorem}
Two faces $f_{hi}^{(1)}$ and $f_{hj}^{(2)}$ of the graph $H(V,Q)$ can share both one edge and two vertexes, or have none shared elements, but they cannot share only one vertex (fig. \ref{Fig:3}).
\end {theorem}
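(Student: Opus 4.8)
The plan is to translate the statement into the explicit midpoint description of the two kinds of faces and then to track precisely which edge-midpoints of $L$ can be held in common. Recall from the construction that a first-subset face $f_{hi}^{(1)}$ is the medial triangle sitting inside a single face $f_l$ of $L$: if $f_l$ has vertices $x_i,x_k,x_m$, then the three vertices of $f_{hi}^{(1)}$ are exactly the midpoints $q_{ik},q_{km},q_{mi}$ of the three sides of $f_l$, and its three edges are the median segments joining consecutive midpoints. By the classification theorem, a second-subset face $f_{hj}^{(2)}$ encircles an inner vertex $x_j$ of $L$; its vertices are exactly the midpoints of the edges of $L$ incident to $x_j$, and each of its sides is the median segment of some incident face of $L$ lying nearest to $x_j$. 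First I would fix such an $f_{hi}^{(1)}$ and $f_{hj}^{(2)}$ and ask which vertices of $H$ they can have in common.

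Next I would establish the decisive dichotomy. A vertex of $H$ common to $f_{hi}^{(1)}$ and $f_{hj}^{(2)}$ must be the midpoint $q$ of an edge $e$ of $L$ that is at once a side of the triangle $f_l$ and an edge incident to $x_j$. Since the endpoints of the three sides of $f_l$ are precisely the three vertices $x_i,x_k,x_m$, an edge of $f_l$ is incident to $x_j$ if and only if $x_j$ is one of these three vertices. Hence a shared vertex exists if and only if $x_j$ is a vertex of the triangle $f_l$; otherwise no side of $f_l$ touches $x_j$, no midpoint is common, and the two faces share no element at all. This already disposes of the \emph{none shared elements} alternative.

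It then remains to show that in the remaining case the two faces share an edge and two vertices, never a single vertex. Suppose $x_j$ is a vertex of $f_l$. Exactly two of the three sides of $f_l$ meet at $x_j$, say $x_j x_a$ and $x_j x_b$; their midpoints $q_{ja}$ and $q_{jb}$ are simultaneously vertices of $f_{hi}^{(1)}$ (being midpoints of sides of $f_l$) and vertices of $f_{hj}^{(2)}$ (being midpoints of edges incident to $x_j$). Thus the two faces already share two vertices. Moreover the median segment $q_{ja}q_{jb}$ is the side of the medial triangle $f_{hi}^{(1)}$ nearest to $x_j$, which by the construction of $f_{hj}^{(2)}$ is one of its boundary edges; so the two shared vertices are joined by a shared edge. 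Consequently the only possibilities are zero common elements, or one common edge together with its two endpoints, and \emph{exactly one shared vertex} cannot occur. This proves the theorem.

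The hard part will be pinning down rigorously that the segment $q_{ja}q_{jb}$ is genuinely an edge of $f_{hj}^{(2)}$, rather than merely a chord crossing it --- in other words, that the two common midpoints are adjacent on the boundary of the surrounding polygon. This is exactly the content of the way $f_{hj}^{(2)}$ is assembled from the inner sides of the medial triangles around $x_j$, as used in the classification theorem, and I would appeal to that construction to close the gap; the remainder is the elementary observation that a vertex of a triangle lies on precisely two of its sides, so the shared set is forced to be either empty or an edge with its two endpoints.
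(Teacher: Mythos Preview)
Your argument is correct. It is, however, a genuinely different route from the one the paper takes. The paper's own proof is a two-line structural remark: since the boundary of any second-subset face $f_{hj}^{(2)}$ is assembled entirely from sides of first-subset triangles $f_{hi}^{(1)}$ (established in the preceding theorems), a given $f_{hi}^{(1)}$ either contributes one of its sides to that boundary---and then automatically contributes the two endpoints of that side---or contributes nothing. There is no explicit midpoint bookkeeping and no reference back to the vertices of $L$.

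Your version instead unwinds both face types to their concrete description in terms of edge-midpoints of $L$, reduces the question to whether the central vertex $x_j$ of $f_{hj}^{(2)}$ is one of the three corners of the triangle $f_l$ carrying $f_{hi}^{(1)}$, and then reads off the shared elements directly. This buys you a sharper statement: you actually identify \emph{which} edge and \emph{which} two vertices are shared, and you make the ``exactly one vertex is impossible'' clause fully explicit via the elementary fact that a triangle vertex lies on exactly two of its sides. The paper's proof is shorter but leaves that last exclusion implicit in the boundary-assembly picture; yours is longer but self-contained. Your closing caveat about $q_{ja}q_{jb}$ being a genuine boundary edge of $f_{hj}^{(2)}$ rather than a chord is exactly the right thing to flag, and your appeal to the construction in the classification theorem is the correct way to close it.
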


\begin {proof}
The validity of the theorem comes from the fact that the boundary of any face of the second subset $\{f_{hj}^{(2)}\}$ is formed by the sides (edges) of the faces $\{f_{hi}^{(1)}\}$ of the first subset. If some edge enters two faces, then two vertexes (the ends of the edge) enter the same faces. The theorem is proved.
\end {proof}

\begin{theorem}
The vertexes of the planar conjugated triangulation, which enter the boundary of any face $f_i \in \{f_{hi}^{(2)}\}$ always have the degree equal to $4$.
\end {theorem}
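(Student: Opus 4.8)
The plan is to reduce the statement to the very first structural theorem about $H$ --- the one asserting that every vertex $v_h$ of $H$ has degree $2$ or $4$, with degree $4$ precisely when the corresponding edge of $L$ separates two finite faces. Recall from the proof of the theorem splitting the faces of $H$ into the two subsets that a face $f_h^{(2)}$ of the second subset arises only around an \emph{inner} vertex $x_i$ of $L$: around an external vertex the medians cannot close up into a circuit, so no such face is formed. Thus I would first fix a face $f_i \in \{f_{hi}^{(2)}\}$ and let $x_i$ be the (necessarily inner) vertex of $L$ enclosed by it.

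Next I would identify the vertices of $H$ lying on the boundary of $f_i$. By construction the boundary of $f_i$ is assembled from those sides of the triangles $f_h^{(1)}$ that face $x_i$: inside each of the $\rho(x_i)$ faces of $L$ meeting at $x_i$ sits a triangle $f_h^{(1)}$, two of whose vertices are the midpoints of the two edges of that face incident to $x_i$, and the side joining them is one edge of the boundary of $f_i$. Running around $x_i$ shows that the vertices of $H$ on the boundary of $f_i$ are exactly the midpoints of the edges of $L$ incident to $x_i$.

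The decisive step is then to show each such edge of $L$ is an interior edge, i.e. separates two finite faces. Since $x_i$ is an inner vertex of $L$, every face of $L$ having $x_i$ as a corner is finite; and any edge $e$ incident to $x_i$ lies between exactly two faces, both of which have $x_i$ as a corner, hence both finite. Therefore $e$ divides two finite faces, so by the first structural theorem its midpoint --- a vertex of $H$ on the boundary of $f_i$ --- has degree $4$. As the chosen vertex was arbitrary, all boundary vertices of $f_i$ have degree $4$, proving the claim.

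I expect the main obstacle to be the middle step: making precise that the boundary of a second-subset face consists exactly of the inner-facing sides of the surrounding triangles, and hence that its vertices are precisely the midpoints of the edges at $x_i$. This requires a careful local picture around $x_i$ together with the observation that no median belonging to a boundary edge of $L$ can appear on the boundary of $f_i$. Once this identification is secured, the degree conclusion follows immediately from the already-established vertex-degree theorem.
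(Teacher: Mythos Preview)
Your argument is correct, but it takes a different route from the paper's own proof. The paper argues entirely inside $H$: the boundary of any $f_h^{(2)}$ is made up of sides of $f_h^{(1)}$-triangles; at each boundary vertex two consecutive boundary sides meet, and these belong to two distinct $f_h^{(1)}$-triangles (since such triangles never share an edge). Each of those triangles contributes exactly two of its sides at that vertex, so the degree is $2+2=4$. No reference to $L$ is needed beyond knowing the face decomposition.

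Your approach instead pulls the question back to $L$: you identify the boundary vertices of $f_h^{(2)}$ with the midpoints of the edges of $L$ incident to the enclosed inner vertex $x_i$, argue that any edge at an inner vertex of $L$ must separate two finite faces, and then invoke the earlier degree theorem (degree $4$ exactly when the corresponding edge of $L$ separates two finite faces). This is a legitimate and clean reduction. What it buys is a clearer structural explanation---it makes transparent \emph{why} these vertices are the degree-$4$ ones in terms of the original triangulation, and it essentially proves the next theorem (only external vertices have degree $2$) at the same time. What the paper's approach buys is brevity: it is a two-line local count that does not require checking that $x_i$ is inner or that edges at inner vertices are interior. Your anticipated ``main obstacle'' (identifying the boundary vertices of $f_h^{(2)}$ precisely) is real but is already handled in the proof of the face-splitting theorem, so you may cite that step directly rather than redo it.
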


\begin {proof}
The boundary of any face $f_i \in \{f_{hi}^{(2)}\}$ of the second subset consists only of the sides of the faces $f_i \in \{f_{hi}^{(1)}\}$ of the first subset. Two vertexes are corresponding (or belong) to each of these sides. Besides two sides of the face $f_i \in \{f_{hi}^{(2)}\}$, two sides of the face $f_i \in \{f_{hi}^{(1)}\}$ enter every such vertex. So, all the vertexes, which enter the boundary of the face $f_i \in \{f_{hi}^{(2)}\}$ have the degree equal to $4$ (fig. \ref{Fig:3}). The theorem is proved.
\end {proof}

\begin{theorem}
Only the external vertexes of the planar conjugated triangulation have the degree equal to $2$.
\end {theorem}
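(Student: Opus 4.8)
The plan is to prove the statement in the sharp form ``a vertex of degree $2$ must be external,'' which by the degree theorem (every vertex of $H$ has degree $2$ or $4$) is the only content here; I emphasise at the outset that the converse is false, since an external vertex may well have degree $4$. The whole argument rests on the correspondence fixed in the construction of $H$: each vertex $v_h$ of $H$ is the midpoint of a uniquely determined edge of $L$, and the medians incident to $v_h$ are sides of the medial triangles $f_h^{(1)}$ lying inside the finite faces of $L$ that are adjacent to that edge.

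First I would recall, from the degree theorem and its proof, the precise reason a vertex has degree $2$: the vertex $v_h$ has degree $2$ exactly when its corresponding edge of $L$ bounds the infinite face of $L$, so that only one finite face of $L$ is adjacent to it and only two medians meet at $v_h$. It therefore suffices to show that whenever the edge of $L$ corresponding to $v_h$ lies on the boundary of the infinite face of $L$, the point $v_h$ lies on the boundary of the infinite face $f_h^{\propto}$ of $H$, i.e.\ $v_h$ is external.

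For this key step I would argue topologically from the construction. Let $e$ be the edge of $L$ corresponding to such a degree-$2$ vertex $v_h$. On the side of $e$ facing the infinite face of $L$ there is no finite face $f_l$, hence no medians are drawn there at all, because the medians of $H$ are created only inside the finite faces of $L$. Consequently a small neighbourhood of the midpoint $v_h$ on that side meets no edge of $H$ and lies in a single unbounded face of $H$; this face is $f_h^{\propto}$. Thus $v_h \in \partial f_h^{\propto}$ and $v_h$ is external, which is exactly what the theorem asserts.

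An alternative derivation would run through the contrapositive and the preceding theorems: an inner vertex of $H$ is, as was observed in the proof of the theorem on the shared vertices of the second subset, a common vertex of two faces of $\{f_h^{(2)}\}$, and by the theorem that every vertex on the boundary of a face of $\{f_h^{(2)}\}$ has degree $4$, such a vertex has degree $4$ and hence not degree $2$. The main obstacle in either route is the same: passing from the purely combinatorial degree count to the topological notion of ``external,'' that is, verifying that the boundary of $f_h^{\propto}$ is composed precisely of the medians adjacent to the infinite face of $L$. This in turn uses the planarity of $H$ together with the fact, noted earlier, that no closed median cycle can be formed around a vertex of $L$ lying on the boundary of the infinite face, so that the relevant midpoints are never enclosed by the finite faces of $H$.
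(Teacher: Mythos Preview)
Your proposal is correct and follows essentially the same route as the paper: both identify degree-$2$ vertices of $H$ with midpoints of edges of $L$ lying on the boundary of the infinite face of $L$, and then conclude that such midpoints are external in $H$. Your write-up is considerably more careful than the paper's one-line proof in justifying the topological step (why such a midpoint actually lies on $\partial f_h^{\propto}$), and your alternative contrapositive argument via the preceding theorems is a nice addition not present in the paper.
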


\begin {proof}
Indeed, the degree $2$ can have only those vertexes of graph $H$, which correspond to those edges of graph $L$, which enter the boundary of the external face, i.e. the external vertexes.
\end {proof}

\begin{corollary}
The external vertexes of the planar conjugated triangulation can have the degrees $\rho(v_i)=2$ and $\rho(v_i)=4$. All the inner vertexes of the planar conjugated triangulation always have the degree $\rho(v_i)=4$.
\end {corollary}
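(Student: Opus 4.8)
The plan is to derive this corollary directly from the two immediately preceding results, handling its two assertions separately. The first ingredient is the theorem that every vertex $v_i$ of $H$ has even degree equal to $2$ or $4$, which guarantees $\rho(v_i)\in\{2,4\}$ for \emph{all} vertices of $H$; the second ingredient is the theorem stating that only the external vertices of $H$ can carry degree $2$, i.e. that $\rho(v_i)=2$ implies $v_i$ is external.

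For the assertion about inner vertices I would argue by contraposition. Let $v_i$ be an inner vertex, that is, a vertex of $H$ not lying on the boundary of the infinite face. By the preceding theorem a vertex of degree $2$ is necessarily external, so an inner vertex cannot have degree $2$; combining this with $\rho(v_i)\in\{2,4\}$ forces $\rho(v_i)=4$. I would reinforce this with the direct geometric reason used throughout the section: an inner vertex $v_i$ corresponds to an edge of $L$ that does not bound the infinite face of $L$, hence separates two finite triangular faces, so exactly four medians are incident to it.

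For the assertion about external vertices, the containment $\rho(v_i)\in\{2,4\}$ is again immediate from the degrees theorem, so it remains to confirm that both values genuinely occur on the boundary of the infinite face of $H$. The value $2$ is realized by the midpoint of any edge of $L$ lying on the outer boundary, exactly as in the proof that degree-$2$ vertices are external. The value $4$ I would realize by an external pinch vertex: the midpoint of an interior edge of $L$ whose two incident medial triangles meet only at that single point, which nonetheless lies on the infinite face of $H$. The smallest witness is $L$ formed by two triangles glued along one edge, where the midpoint of the shared edge has degree $4$ yet is traversed by the boundary walk of the infinite face, so it is an external degree-$4$ vertex.

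The main obstacle, and the only genuinely delicate point, is the mismatch between two notions that the preceding proof tacitly identifies: an ``external vertex of $H$'' in the sense of lying on the infinite face of $H$, versus a vertex of $H$ corresponding to a boundary edge of $L$. These coincide for degree-$2$ vertices but not in general, since a degree-$4$ cut vertex of $H$ may still lie on the infinite face. I would therefore fix the definition of external strictly as ``on the boundary of the infinite face of $H$,'' so that the contrapositive of the degree-$2$ theorem applies cleanly to inner vertices, while the degree-$4$ external examples are not excluded by an overly narrow reading.
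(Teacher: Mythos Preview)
Your argument is correct and is exactly how the paper intends the corollary to follow: it is stated there without any proof, as an immediate consequence of the theorem that every vertex of $H$ has degree in $\{2,4\}$ together with the theorem that only external vertices can have degree $2$; your contrapositive for inner vertices simply makes this explicit.

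One minor point about your witness for an external degree-$4$ vertex: two triangles glued along a single edge is not a planar triangulation in the paper's sense (four vertices but only five edges, hence not maximal planar). The phenomenon you want is already visible in $K_4$: in the standard planar drawing the three spoke midpoints have degree $4$ in $H$, yet the boundary walk of the infinite face of $H$ is a hexagon through all six vertices, so these spoke midpoints are genuine external degree-$4$ vertices. The definitional caution in your last paragraph is well taken and is in fact more careful than the paper, which does not separate ``lies on the infinite face of $H$'' from ``corresponds to a boundary edge of $L$.''
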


\begin{corollary}
If graph $H$ has the cut nodes, then such vertexes always have the degree $\rho(v_i)=4$.
\end {corollary}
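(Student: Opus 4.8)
The plan is to use the dichotomy, already established earlier in this section, that every vertex of $H$ has degree either $2$ or $4$, together with the fact that degree $2$ occurs only at external vertices. Since the assertion is that every cut node has degree $4$, it suffices to prove the contrapositive-flavored statement that no vertex of degree $2$ can be a cut node. Once that is done, the only remaining possibility for any cut node of $H$ is degree $4$, which is exactly the claim.

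First I would recall, from the preceding theorem and its corollary, that a vertex $v_h \in H$ has degree $2$ precisely when it is an external vertex, i.e. when the edge of $L$ to which it corresponds lies on the boundary of the infinite face. Such a boundary edge borders exactly one finite face $f_l$ of $L$, and by Theorem (on the maximal planar graph) this $f_l$ is a triangle. By the construction of $H$, inside $f_l$ there sits a single triangular face $f_h^{(1)}$ whose three vertices are the midpoints of the three edges of $f_l$; the vertex $v_h$ is one of these three midpoints, and its two incident edges in $H$ are two of the three sides of this triangle.

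The crucial step I would then carry out is the observation that the two neighbors of $v_h$ are the other two vertices of the same triangle $f_h^{(1)}$, and that these two neighbors are joined directly by the third side of $f_h^{(1)}$, an edge of $H$ that does not pass through $v_h$. Consequently, in the graph $H - v_h$ the two neighbors remain adjacent, hence connected, and deleting $v_h$ together with its two incident edges cannot separate them nor disturb the connectivity of any other pair of vertices. Therefore $v_h$ is not a cut node, and combined with the degree dichotomy this forces every cut node of $H$ to have degree $\rho(v_i)=4$.

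The only point requiring genuine care — and hence the main obstacle — is to justify cleanly that both edges incident to a degree-$2$ vertex belong to one and the same triangle $f_h^{(1)}$, so that the third side of that triangle really does furnish a $v_h$-avoiding path between the two neighbors. This follows from the construction, since a boundary edge of $L$ meets only one finite face and so only that face's triangle contributes edges at $v_h$; but it should be stated explicitly, because it is precisely this local triangular structure that prevents a degree-$2$ vertex from acting as a bridge between two otherwise disconnected parts of $H$.
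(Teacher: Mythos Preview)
Your argument is correct: a degree-$2$ vertex of $H$ sits on a single triangle $f_h^{(1)}$, its two neighbors are joined by the third side of that triangle, and hence removing it cannot disconnect $H$; combined with the $2$/$4$ degree dichotomy this forces every cut node to have degree $4$.

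The paper's own proof is the single phrase ``According to the construction,'' with no further detail. Your write-up is therefore not a different route so much as an explicit unpacking of what that phrase presumably means. The concrete observation you isolate---that both edges at a degree-$2$ vertex belong to the same triangle $f_h^{(1)}$, so the third side provides a $v_h$-avoiding path between the neighbors---is exactly the piece of the construction that makes the corollary immediate, and it is good that you flag it as the point needing care. Nothing is missing.
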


\begin {proof}
According to the construction.
\end {proof}

Also, it is seen by the construction, that graph $H$ always can be presented as the graph, which is constructed out of the triangle faces, connected with the help of the vertexes.

Our final target is the proof of the possibility of the coloring of graph's $H$ vertexes with three colors. Graph $H$ is the adjacency graph of the edges of the planar triangulation $L$. And if it is proved that graph's $H$ vertexes by all means can be colored with three colors, then Four Color Hypothesis will become the proved theorem.

Let us formulate this theorem.

\begin{theorem}
Let $H$ be the planar conjugated triangulation. Its chromatic class is: $\chi(H) \le 6$. It comes from the Brook's theorem. Then its chromatic number is: $\gamma(H) \le 3$.
\end {theorem}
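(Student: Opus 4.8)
The plan is to read $H$ as the medial graph of the triangulation $L$ and to turn the desired vertex $3$-colouring into a colouring condition on the ordinary (cubic) planar dual $L^{*}$, using only the structural lemmas already established. First I would extract what Brooks' theorem genuinely gives: by the degree theorem every vertex of $H$ has degree $2$ or $4$, so $\Delta(H)=4$, and since $H$ is connected, not complete, and not an odd cycle, Brooks yields $\gamma(H)\le 4$. (This also matches the quoted bound $\chi(H)\le 6$ on the chromatic class, which is Shannon's estimate $\lfloor 3\Delta/2\rfloor=6$ for $\Delta=4$.) The entire difficulty is therefore concentrated in the step from $4$ down to $3$.

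The key reduction I would carry out uses the face lemmas directly. Because adjacent faces of $H$ have opposite types, every edge of $H$ separates a triangle face $f_h^{(1)}$ from a polygon face $f_h^{(2)}$, so each edge lies on exactly one triangle $f_h^{(1)}$; hence \emph{every} adjacency of $H$ is a side of some triangle. A proper vertex $3$-colouring of $H$ is then nothing but a colouring in which the three vertices of each triangle $f_h^{(1)}$ receive three distinct colours. Now I would identify the triangles $f_h^{(1)}$ with the vertices of $L^{*}$ (one per face of $L$) and the vertices of $H$ with the edges of $L^{*}$ (one per edge of $L$); since $L^{*}$ is cubic, the three vertices of $H$ in a triangle are exactly the three dual edges at the corresponding vertex of $L^{*}$. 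Thus the reduction I am aiming for is the equivalence
$$\gamma(H)\le 3 \iff L^{*}\text{ is properly }3\text{-edge-colourable}.$$

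Invoking the classical facts for a bridgeless cubic planar graph --- $3$-edge-colourability iff $4$-face-colourability (Tait), together with the observation that the faces of $L^{*}$ are precisely the vertices of $L$ --- this reads $\gamma(H)\le 3 \iff L$ is $4$-vertex-colourable. So after the reduction the statement to be proved is logically identical to the Four Colour Theorem for $L$. The main obstacle is exactly this: Brooks' theorem supplies only $\gamma(H)\le 4$, and the final descent to $3$ cannot be a formal corollary of Brooks, since by the equivalence it \emph{is} the Four Colour Theorem itself --- a statement that fails for general $4$-regular planar graphs and holds for $L^{*}$ only by the deep Appel--Haken argument. Any self-contained completion must therefore use a property special to medial graphs of triangulations beyond degree and planarity; the most promising lever is the Bi--Euler structure of the earlier Euler-circuit theorem, which orients the edges of $H$ and could be used to try to build a nowhere-zero $\mathbb{Z}_2\times\mathbb{Z}_2$-flow on $L^{*}$. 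But verifying that such a flow always exists is precisely where the genuine hardness of the Four Colour Problem reappears, so I would flag this descent from $4$ to $3$ as the crux and expect it to be both the true content and the likely gap of the argument.
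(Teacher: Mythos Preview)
Your analysis is accurate, and in fact the paper contains \emph{no proof} of this theorem at all: immediately after stating it the author writes ``Before beginning the proof of this theorem we ought to study many other properties of the planar triangulation\ldots{} It will be made in the next article.'' So there is nothing to compare your argument against; the paper treats the statement as a target, not as something established here.

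Your reduction is correct and is exactly the classical one. Recognising $H$ as the medial graph of $L$, you identify the vertices of $H$ with the edges of the cubic planar dual $L^{*}$ and the triangular faces $f_h^{(1)}$ with the vertices of $L^{*}$; since (by the face lemmas) every edge of $H$ lies on a unique triangle, a proper $3$-colouring of $V(H)$ is precisely a proper $3$-edge-colouring of $L^{*}$, which by Tait is equivalent to a $4$-face-colouring of $L^{*}$, i.e.\ a $4$-vertex-colouring of $L$. The paper itself concedes this equivalence a few lines earlier (``if it is proved that graph's $H$ vertexes by all means can be colored with three colors, then Four Color Hypothesis will become the proved theorem''). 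Your observation that Brooks yields only $\gamma(H)\le\Delta(H)=4$ --- and that the phrase ``It comes from the Brook's theorem'' cannot carry the descent to $3$ --- is therefore exactly the right diagnosis: the gap you flag is real, and it is the Four Colour Theorem itself. (Minor note: the paper's attribution of the edge bound $\chi(H)\le 6$ to Brooks is misplaced; Brooks concerns vertex colouring. Your reading of it as Shannon's $\lfloor 3\Delta/2\rfloor$ is a charitable repair, though for the simple graph $H$ Vizing already gives $\Delta+1=5$.)
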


Before beginning the proof of this theorem we ought to study many other properties of the planar triangulation. But first of all we will examine the properties of the special graphs. It will be made in the next article. The proof of the Four Color Problem appears to become the deeply prolonged work. 

\section{The numeric properties of dual planar triangulations} 

Let us examine the planar triangulation $L$ (fig. \ref{Fig:2}, thin line) and the planar conjugated triangulation $H$ (fig. \ref{Fig:2}, dotted line). Let us accept the following abridgement: the planar triangulation will be notes as PT; the planar conjugated triangulation will be noted as PCT. Let us introduce the following notations: $n$ --- is the number of the vertexes; $m$ --- is the number of the edges; $\mu$ --- is the number of the faces. Let us develop the dependences for both the number of the vertexes and the number of the edges of the planar conjugated triangulation from the number of vertexes of planar triangulation.

\begin{theorem}
If $m_L$ --- is a number of the edges (arcs), $n_L$ --- is a number of the vertexes of graph $L$, then for PCT (graph $H$) the dependences will be: $n_H=3n_L-6$ and $m_H=6n_L-12$.
\end{theorem}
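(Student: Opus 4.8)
The plan is to prove the two identities separately, deriving each from the construction of $H$ together with the counting facts already established for the maximal planar graph $L$. The first identity is essentially a bookkeeping statement about vertices, while the second requires an honest count of the triangle sides and a careful treatment of the infinite face.

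First I would pin down $n_H$. By the very definition of the planar conjugated triangulation, each vertex $q_i \in Q$ of $H$ is the midpoint of an edge of $L$, and this assignment is a bijection between the vertex set of $H$ and the edge set of $L$. Hence $n_H = m_L$. Since a planar triangulation is a maximal planar graph, the corollary to Euler's formula gives $m_L = 3n_L - 6$. Substituting yields $n_H = 3n_L - 6$, which is the first claimed identity; once the bijection is stated, this half is immediate and needs no further argument.

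The second identity is where the real work lies, and here I would count the edges of $H$ by grouping them into the triangular faces $f_h^{(1)}$ of the first subset. By construction every face of $L$ carries exactly one such triangle, each triangle contributes three edges, and by the earlier theorem two faces of the first subset never share an edge; therefore $m_H = 3k$, where $k$ is the number of faces of $L$ that carry a triangle. Applying Euler's formula to $L$ gives its total face count $f_L = m_L - n_L + 2 = 2n_L - 4$, so that $m_H = 3(2n_L - 4) = 6n_L - 12$. As an independent check I would invoke the handshake lemma on $H$: the bijection with the edges of $L$ together with the degree theorem gives $\sum \deg = 2 m_H$, and I would verify that this degree sum reproduces $6n_L - 12$.

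The main obstacle is precisely the bookkeeping at the infinite face and the external vertices. The naive count using only the $2n_L - 5$ \emph{finite} faces of $L$ would give $3(2n_L - 5) = 6n_L - 15$, which is short of the target by exactly one triangle, while the degree theorem asserts that the external vertices (those coming from edges of $L$ on the boundary of the infinite face) have degree $2$ rather than $4$. Reconciling these two observations with the stated value $6n_L - 12$ is the delicate step: I would have to make explicit exactly which faces of $L$ contribute a triangle to $H$ and how the degree-$2$ external vertices enter the degree sum, so that the final edge total lands on $6n_L - 12$ consistently with both the face decomposition and the handshake check.
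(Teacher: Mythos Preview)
Your argument is correct and, for the edge count, considerably more direct than the paper's. For $n_H$ both you and the paper use the bijection $n_H=m_L$; you then immediately invoke the Euler corollary $m_L=3n_L-6$, whereas the paper only records this at the very end after a long detour. For $m_H$ you go straight through Euler's formula on $L$ to get $f_L=2n_L-4$ and multiply by $3$. The paper instead argues circuitously: it sets $m_H=3\mu_L$ and $2n_H=3\mu_L$ by double counting, computes $\nu(H)=m_H-n_H+1=\tfrac32\mu_L+1$, equates this with $\mu_H^{(1)}+\mu_H^{(2)}$, uses $\mu_H^{(2)}=n_L-1$ to solve $\mu_H^{(1)}=2n_L-4$, and only then reads off $m_H=6n_L-12$. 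Your route is cleaner; the paper's route has the minor advantage of simultaneously producing the separate counts $\mu_H^{(1)}$ and $\mu_H^{(2)}$, which it reuses later.

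The worry you raise about the infinite face is exactly the right one, and it is the one point you should settle rather than leave open. The formulas $m_H=6n_L-12$ and $m_H=2n_H$ force every vertex of $H$ to have degree $4$, which means the outer (infinite) triangular face of $L$ must also receive its median triangle; equivalently, one is working with the triangulation on the sphere. With that reading your count $k=f_L=2n_L-4$ is the correct one, the handshake check $2m_H=4n_H$ goes through, and the ``short by one triangle'' discrepancy you noticed disappears. The earlier theorem about degree-$2$ external vertices describes the variant in which the outer median triangle is omitted; the paper is not entirely consistent between the two pictures, and its own proof of the present theorem implicitly adopts the spherical reading via the identities $2n_H=3\mu_L$ and $m_H=3\mu_L$. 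So simply commit to including the outer triangle, and your argument is complete.
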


\begin{proof}
According to the construction of graph $H$: $n_H=m_L$. The number of the finite faces in graph $H$: $\mu_H=\mu_H^{(1)}+\mu_H^{(2)}$, where $\mu_H^{(1)}$ --- is the number of the triangle faces in $f_h^{(1)}$ and $\mu_H^{(2)}$ --- is the number of the polygon faces in $f_h^{(2}$.

According to the construction we also have for graph $H$: $\mu_H^{(1)}=\mu_L$. That is, the number of the triangle faces $f_h^{(1)}$ in graph $H$ is equal to the number of all the faces $\mu_L$ of graph $L$ (according to the construction). Under theorem 8 the number of all the polygon faces in graph $H$, including the infinite face, is equal to the number of the faces in PT and makes up: $\mu_H^{(0)}=n_L$. At that the number of the finite faces is equal to: $\mu_H=n_L-1$. According to the construction of graph $H$, the number of the graph's edges is equal to: $m_H=3\mu_H^{(1)}=3\mu_L$.

Three vertexes $q_i$ of the graph $H$ are corresponding to each face $\mu_L$ of graph $L$. At the same time each graph's $H$ vertex is corresponding to two faces of graph $L$. That is why $2n_H=3\mu_L=3\mu_H^{(1)}$ or $n_H=3/2 \mu_L=3/2 \mu_H^{(1)}$ or $m_L=n_H=3/2 \mu_L=3/2 \mu_H^{(1)}$.

The cyclomatic number of graph $H$ is equal to the number of the finite faces: $\nu(H)=\mu_H^{(1)}+\mu_H^{(2)}$. At the same time: $\nu(H)=m_H-n_H+1$.

Taking into account the formulas above, we'll get:
$$\nu(H)=3\mu_L-3/2 \mu_L+1=3/2 \mu_L+1$$
Thereafter: 
$$\mu_H^{(1)}+\mu_H^{(2)}=3/2 \mu_L+1=3/2 \mu_H^{(1)}+1$$
And next: 
$$\mu_H^{(2)}=1/2 \mu_H^{(1)}+1$$
As far as the number of the finite faces of second subset is equal to: $\mu_H^{(2)}=n_L-1$ we'll get: $n_L=1/2 \mu_H^{(1)}+2$. 

Hence: $\mu_H^{(1)}=2n_L-4$; and further we'll get $m_H=6n_L-12$. 

Finally we have: 
$$n_H=m_L=3n_L-6$$ 
$$m_H=6n_L-12$$
\end{proof}

Let us summarize the obtained results in table (fig. \ref{Fig:5}). 

\begin{figure}[htb]
		\includegraphics[width=0.7\textwidth]{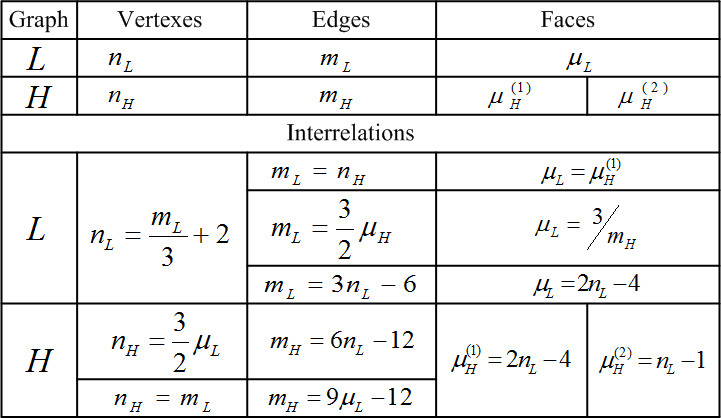}
	\caption{Dependencies between numeric properties for the different graphs}
	\label{Fig:5}
\end{figure}

\textit{Remark 1}: Some of these formulas can be found in the works of other authors, for example, Harary. But, as you can see, they also may be obtained applying the dual graphs.

\section {A cyclomatic number of the conjugated triangulation}

A cyclomatic number is always equal to the maximal number of the independent cycles in the graph. Each step, increasing the cyclomatic number, also increases the maximal number of the independent cycles. And this fact leads to the increasing of the graph's complexity.

\begin{theorem}
A cyclomatic number of the planar conjugated triangulation $H$ depends only on the number of the vertexes $n_H$ in graph $H$ and is equal to : $\nu(H)=n_H+1$.
\end {theorem}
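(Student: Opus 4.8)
The plan is to reduce the statement to a single algebraic substitution resting on two facts already in hand: the general relation between the cyclomatic number and the edge and vertex counts, and the explicit expressions for $n_H$ and $m_H$ obtained in the preceding theorem. First I would recall that $H$, being a connected planar graph, satisfies $\nu(H) = m_H - n_H + 1$; this is precisely the relation used in the corollary to Euler's formula, where the cyclomatic number counts the finite faces. So the whole task is to rewrite $m_H - n_H + 1$ in terms of $n_H$ alone.

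Next I would invoke the preceding theorem, which gives $n_H = 3n_L - 6$ and $m_H = 6n_L - 12$. The decisive observation is the identity $m_H = 2(3n_L - 6) = 2n_H$: the edge count of the conjugated triangulation is exactly twice its vertex count, and in particular $n_L$ drops out. Substituting into the cyclomatic formula yields $\nu(H) = 2n_H - n_H + 1 = n_H + 1$, which is the asserted equality. Since the final expression contains no reference to $n_L$ (or to any other parameter of $L$), the claim that $\nu(H)$ depends \emph{only} on $n_H$ is established at the same stroke.

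I would then add a consistency check against the computation performed inside the previous theorem, where $\nu(H) = \tfrac{3}{2}\mu_L + 1$ together with $\mu_L = \mu_H^{(1)} = 2n_L - 4$ gives $\nu(H) = 3n_L - 5 = n_H + 1$. This confirms the result by an independent route and guards against an off-by-a-constant slip in the substitution.

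I do not anticipate a real obstacle, as the argument is a direct substitution. The single point demanding care is the connectivity hypothesis: the identity $\nu = m - n + 1$ presupposes that $H$ is connected (otherwise one must write $\nu = m - n + c$, with $c$ the number of components). I would therefore note at the outset that $H$ is connected by construction — it admits a Bi--Euler circuit by the earlier theorem, and an Eulerian graph is in particular connected. With connectivity secured, the remaining steps are purely formal.
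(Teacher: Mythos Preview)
Your proposal is correct and follows essentially the same approach as the paper: invoke $\nu(H)=m_H-n_H+1$, use $m_H=2n_H$ (obtained from $m_H=6n_L-12=2(3n_L-6)=2n_H$), and substitute. The paper's proof is in fact just this two-line substitution; your added consistency check and connectivity remark are sound but go beyond what the paper records.
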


\begin {proof}
According to the formulas $\nu(H)=m_H-n_H+1$ and $m_H=2n_H$ we have $\nu(H)=n_H+1$. The theorem is proved.
\end {proof}

\begin{corollary}
As far as the number of the edges $m_H$ in the planar conjugated triangulation $H$ is equal to: $m_H=2n_H$, then the dependence of the cyclomatic number on the number of the edges may be presented as: $\nu(H)=m_H/2+1$.
\end {corollary}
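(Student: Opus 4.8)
The plan is to reduce the claim to the two relations already available, namely the identity $m_H = 2 n_H$ asserted in the statement and the formula $\nu(H) = n_H + 1$ established in the preceding theorem. First I would confirm $m_H = 2 n_H$ directly from the earlier dependences $n_H = 3 n_L - 6$ and $m_H = 6 n_L - 12$: since $6 n_L - 12 = 2(3 n_L - 6)$, we obtain $m_H = 2 n_H$ at once, so no separate combinatorial argument about the construction of $H$ is required at this stage.

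Next I would invert this relation to express the number of vertexes through the number of edges, which gives $n_H = m_H / 2$. The only point worth noting is that $m_H$ is automatically even, being twice $n_H$, so $m_H/2$ is a genuine integer and the substitution is legitimate.

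Finally I would substitute $n_H = m_H / 2$ into the conclusion $\nu(H) = n_H + 1$ of the preceding theorem, obtaining $\nu(H) = m_H/2 + 1$ immediately. I do not expect any real obstacle here: the corollary is a single algebraic substitution into a formula already proved, and its entire content rests on the previously established relations $\nu(H) = n_H + 1$ and $m_H = 2 n_H$.
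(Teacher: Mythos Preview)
Your proposal is correct and matches the paper's approach: the paper gives no separate proof for this corollary, treating it as the immediate substitution of $n_H = m_H/2$ (from $m_H = 2n_H$) into the preceding theorem's conclusion $\nu(H) = n_H + 1$. Your additional verification that $m_H = 2n_H$ follows from $n_H = 3n_L - 6$ and $m_H = 6n_L - 12$ is a harmless reinforcement of a relation the paper already used in the proof of that theorem.
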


\begin{corollary}
$\nu(H)=m_L+1$
\end{corollary}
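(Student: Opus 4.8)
The plan is to obtain this identity by chaining together the theorem immediately preceding it with the single structural fact about the construction of $H$ that already appeared in the proof of Theorem~11. The preceding theorem gives the relation $\nu(H)=n_H+1$, expressing the cyclomatic number purely in terms of the vertex count of the conjugated triangulation. The only remaining ingredient I need is a way to rewrite $n_H$ in terms of a quantity of the original graph $L$.

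That ingredient is the defining correspondence of the construction: the vertices $q_i$ of $H$ are placed at the midpoints of the edges of $L$, so each vertex of $H$ is in bijection with an edge of $L$. Hence $n_H=m_L$, exactly the equality noted at the start of the proof of Theorem~11. First I would recall this bijection to justify $n_H=m_L$, and then substitute it directly into $\nu(H)=n_H+1$ to conclude
\[
\nu(H)=n_H+1=m_L+1 .
\]

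I do not expect any genuine obstacle here, since the statement is a one-line substitution once the two prior results are in hand; there is no combinatorial estimate or case analysis to carry out. The only point requiring a moment of care is to make sure I invoke the correct form of the cyclomatic relation — namely $\nu(H)=n_H+1$ rather than $\nu(H)=m_H/2+1$ — and then apply the construction identity $n_H=m_L$; either route yields the same conclusion, but stating $n_H=m_L$ explicitly keeps the derivation transparent and ties the cyclomatic number of $H$ back to the edge count of the underlying triangulation $L$.
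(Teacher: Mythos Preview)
Your proposal is correct and matches the paper's own proof essentially line for line: the paper also invokes $n_H=m_L$ from the construction and substitutes into $\nu(H)=n_H+1$ to obtain $\nu(H)=m_L+1$. There is nothing to add or adjust.
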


\begin{proof} 
Indeed, as far as $n_H=m_L$ (fig. \ref{Fig:2}), but $m_L$ --- is the number of the pairs of the graph's $L$ vertexes; so: $\nu(H)=n_H+1=m_L+1$.
\end{proof}

\textit{Remark 2}: The previous theorem and corollary are very important and specify the fact that the number of the logical equations, which connect the pairs of differently colored vertexes in graph $L$ between themselves, is equal to the number of the logical equations, which connect between themselves the differently colored vertexes in graph $H$.

\textit{Remark 3}: In each case the number of the equations is a unit more than it is necessary to have in the system of the linearly independent equations. Thus, the system appears to be an over determined one. The algorithms of obtaining the exact solution are absent for such equations' systems. Maybe such circumstances will help us to understand the reasons for the absence of the local algorithms intended for the vertexes' coloring in the planar triangulations in a general case. And maybe it will help us to understand actual reasons for the numerous attempts to solve the Four Color Problem with the help of the mathematical statistics' methods.

\section{The interdependencies between the numeric properties of the planar triangulations}

Let us examine the mutual relations between the cyclomatic numbers of different planar graphs, taking into account the planar graph, the planar triangulation (PT) and the planar conjugated triangulation (PCT).

So, at last we have the planar homogeneous graph $M$, the planar triangulation $L$ (generated out of graph $M$) and the planar conjugated triangulation $H$ (generated out of graph $L$). Let us prove some theorems.

\begin{theorem}
A planar triangulation's number of the edges is clarified only by the number of its vertexes.
\end {theorem}

\begin{proof}
From the formulas: $m_L=2/3 \mu_H^{(1)}$ and $\mu_H^{(1)}=2n_L-4$ we can receive: $m_L=3n_L-6$. Theorem is proved.
\end {proof}

This result can be obtained as the corollary from the Euler theorem \cite {Harary}. The identity of the obtained results allows to say that examining the planar graphs with the help of the dual graphs is correct and competent.

\begin{theorem}
A cyclomatic number of the planar triangulation is clarified only by the number of the vertexes of the planar triangulation.
\end {theorem}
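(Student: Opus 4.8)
The plan is to derive the cyclomatic number $\nu(L)$ of the planar triangulation directly from the Euler relation, showing it is a function of $n_L$ alone. Recall that the cyclomatic number of any connected graph is $\nu(L)=m_L-n_L+1$, so the task reduces to eliminating $m_L$ in favour of $n_L$. The key input is the preceding theorem, which establishes that for the planar triangulation the number of edges is clarified only by the number of its vertexes, namely $m_L=3n_L-6$. This identity is exactly the lever I need.

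First I would substitute $m_L=3n_L-6$ into the defining formula for the cyclomatic number:
$$\nu(L)=m_L-n_L+1=(3n_L-6)-n_L+1.$$
Collecting terms gives
$$\nu(L)=2n_L-5.$$
Since the right-hand side depends on no quantity other than $n_L$, this already demonstrates the claim: the cyclomatic number of the planar triangulation is determined by its vertex count alone, with no freedom left once $n_L$ is fixed.

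As a consistency check, I would cross-reference this with the dual-graph relations proved earlier. From the relation $\mu_H^{(1)}=2n_L-4$ established in the numeric properties section, and the identification of $\mu_H^{(1)}$ with the total face count $\mu_L$ of the triangulation (including the infinite face), one sees that the number of finite faces of $L$ is $\mu_L-1=2n_L-5$. By Theorem~1 and its corollary (the Euler formula), the number of finite faces of a planar graph equals its cyclomatic number, so $\nu(L)=2n_L-5$, in agreement with the edge-based computation. This concordance between the direct Euler argument and the dual-triangulation bookkeeping reinforces that the derivation is correct.

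The main obstacle here is not any genuine difficulty but rather ensuring that the quoted identity $m_L=3n_L-6$ is invoked cleanly, since its validity rests on $L$ being a genuine planar triangulation (a maximal planar graph) with $n_L\ge 3$; the formula fails for smaller or non-maximal graphs. Provided that hypothesis is in force, every step is a routine substitution, and the only care required is to keep the constant term correct so that the final expression $\nu(L)=2n_L-5$ is reported without arithmetic slips.
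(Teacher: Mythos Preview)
Your proposal is correct and follows essentially the same route as the paper: both start from $\nu(L)=m_L-n_L+1$, substitute $m_L=3n_L-6$, and arrive at $\nu(L)=2n_L-5$. The only cosmetic difference is that the paper re-derives $m_L=3n_L-6$ on the spot from the dual-graph identities $m_L=\tfrac{3}{2}\mu_H^{(1)}$ and $\mu_H^{(1)}=2n_L-4$, whereas you simply cite the preceding theorem for that fact; your added consistency check via the finite-face count is a welcome sanity check but not part of the paper's argument.
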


\begin{proof}
Graph $L$ cyclomatic number: $\nu(L)=m_L-n_L+1$, but $m_L=3/2 \mu_{H_1}$, and $\mu_{H_1} =2n_L-4$. Therefore $m_L=3n_L-6$. Hence, $\nu(L)=3n_L-n_L-6+1=2n_L-5$. And finally: $\nu(L)=2n_L-5$. Theorem is proved.
\end{proof}

\textit{Remark 4}: These results also may be obtained from the Euler Theorem's corollary \cite {Harary}. And again the equivalence of the results confirms the competence of the examining planar graphs' properties with the help of the results, obtained for the dual graphs \cite {Malinin}, notably with the help of the planar conjugated triangulation.

Let us now deduce a series of the equations, which will connect the cyclomatic numbers of the planar graphs $L$ and $H$.

$\nu(L)=m_L-n_L+1$, but: $n_L=m_L/3+2$, so: $\nu(L)=2/3 m_L-1$.

On the other hand: $\nu(L)=2n_L-5$.

Then: $\nu(L)=2/3 m_L+4-5=2/3 m_L-1$.

As: $m_L=n_H$, so: $\nu(L)=2/3 n_{H-1}$.

In its turn $\nu(H)=n_H+1$ and $n_H=\nu(H)-1$.

Hence: 

$$\nu(L)=2/3 \nu(H)-2/3-1= 2/3 \nu(H)-5/3$$.

And next: 
$$\nu(H)=3/2 \nu(L)+5/2$$

Or: 
$$\nu(H)=\nu(L)+(\nu(L)+5)/2$$.

Or: $\nu(H)=\nu(L)+\Delta(L)$, where $\Delta(L)=(\nu(L)+5)/2$ (the increment of the cyclomatic number). From this fact it also comes that the increment of the cyclomatic number is always even. 

\textit{Remark 5}: This result shows us that at the conversion of PT into PCT the cyclomatic number increases, at that the increment of the cyclomatic number depends on the value of the initial cyclomatic number. As for the applications, we can say that the graph's complexity (a number of the independent cycles) at the process of the conversion into the conjugated graph increases. The more complicated the initial graph was, the bigger is the increments' growth. At the first sight it seems that the application of the properties of the dual graphs \cite {Malinina} to the examining of the properties of the planar triangulations may make the results worse. But step-by-step it will be shown that only the introduction of PCT permits us to widen considerably our view on the properties of PT.

It is known that the cyclomatic number is always integer. From the formula above, notably, $\nu(L)=(2\nu(H)-5)/3$ it comes that the cyclomatic number is multiple to 3. This dependence is presented in fig. \ref{Fig:6}.

\begin{figure}[htb]
	\includegraphics[width=0.7\textwidth]{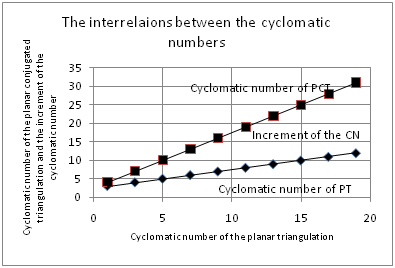}
	\caption{Interrelations between the cyclomatic numbers of the planar triangulation and the planar conjugated triangulation}
	\label{Fig:6}
\end{figure}

The graphical representation, of course, is somehow a little bit relative because the cyclomatic number is always even. This fact is displayed with the help of the markers. But it (fig. \ref{Fig:6}) again shows us that the cyclomatic number increases at the conversion from the PT to PCT. But in spite of the fact that the complexity increases, our chance to find the solution is not equal to zero. It also shows us that the triangulations exist only at some certain number of the graph's vertexes. Let us go further and see what we can do in such tremendously difficult circumstances.

For the cyclomatic number $\nu(H)$ of graph $H$ other dependences can be developed as well.

According to the definition: $\nu(H)=m_H-n_H+1$. 

As: $n_H=m_H/2$, so $\nu(H)=m_H-m_H/2+1=m_H/2+1$.

Let's find the relationship between the cyclomatic numbers for graphs $M$ and $L$.

$\nu(M)=\mu_M-1=n_L-1$, and $\nu(L)=2n_L-5$.
 
After some transformations we can receive that: $\nu(L)=2\nu(M)-3$.

Then $\nu(M)=(\nu(L)+3)/2$. 

And next: 
$$\nu(H)=2\nu(M)-3+(2\nu(M)-3+5)/2=2\nu(M)+\nu(M)-2$$.

Finally: 
$$\nu(H)=3\nu(M)-2$$ 

So: 
$$\nu(M)=\nu(H)/3+2/3=\frac{\nu(H)+2}{3}$$

Formulas, obtained for the determination of the cyclomatic numbers of the equivalent planar graphs are summarized in table (fig. \ref{Fig:8}). 

\begin{figure}[htb]
	\includegraphics[width=0.75\textwidth]{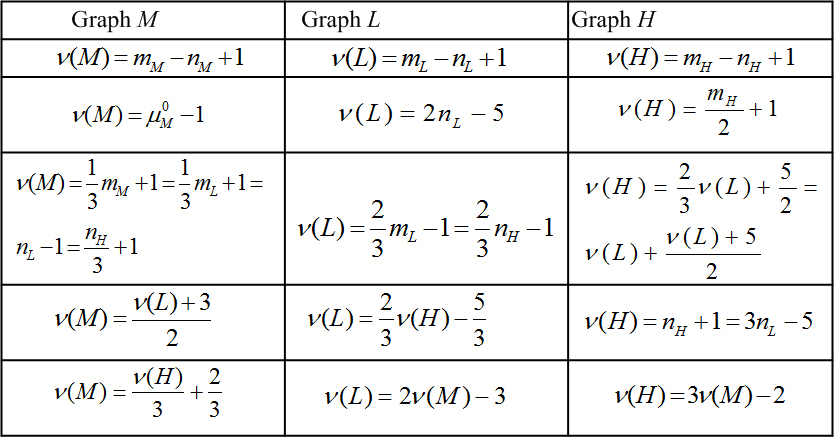}
  \caption{Formulas for the cyclomatic numbers of the different dual graphs}
	\label{Fig:8}
\end{figure}

Their graphical representation is presented in fig.(\ref{Fig:9} and \ref{Fig:10}. The upper line in fig.\ref{Fig:6} shows us the relationship between the number of the edges and the number of the vertexes of the planar triangulation. The lower line represents the relationship between the cyclomatic number and the number of the vertexes also of the planar triangulation. 

\begin{figure}[htb]
		\includegraphics[width=0.8\textwidth]{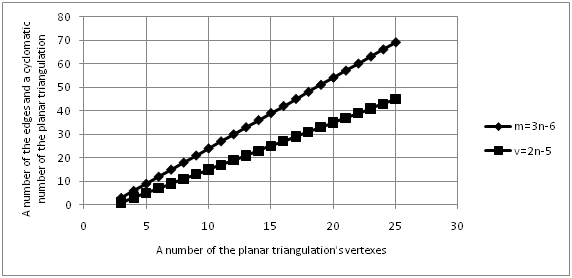}
	\caption{}
	\label{Fig:9}
\end{figure}

Similarly, in fig.\ref{Fig:10} such relations are presented for the planar conjugated triangulation. 

\begin{figure}[h]
		\includegraphics[width=0.8\textwidth]{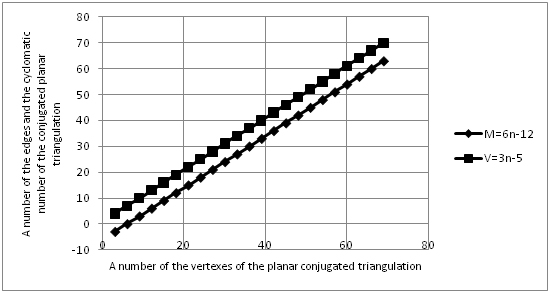}
	\caption{}
	\label{Fig:10}
\end{figure}

\section{Properties of matrixes $R$ and $F$ for graph $H$}

As far as graph $H$ contains the double Euler circuit (see above), all the edges of graph $H$ can be oriented in the direction of this cycle. As a result, graph $H$ (fig. \ref{Fig:3}) can be transformed into the direct graph and matrix $F$ can be composed as an adjacency matrix of its vertexes \cite {Malinin}. The straight converting gives us a possibility to compose $R$ matrix as an adjacency matrix of graph's $H$ edges.

So, let us examine the properties of matrixes both $R$ and $F$ of graph $H$.
Then we'll examine the change of the cyclomatic number at the matrixes' $F$ straight converting.

We know that matrix $R$ is an adjacency matrix of graph's $H$ edges; and matrix $F$ is the adjacency matrix of its vertexes. The properties of both of them can be proved in the following theorems.

\begin{theorem}
Matrix $R$ is anti--symmetric one.
\end {theorem}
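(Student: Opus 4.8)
The plan is to obtain the anti--symmetry of $R$ directly from the orientation of $H$ furnished by the Bi--Euler circuit, rather than from any incidence count. By the corollary stating that the edges $r_i \in H$ may be oriented in the direction in which the Euler circuit traverses them, and by the fact that an Euler circuit passes along each edge exactly once, this assigns to every edge of $H$ one and only one direction. So the first step is to fix this orientation, turning $H$ into a directed graph in which every edge has a well--defined tail and head, and to regard $R$ as the adjacency matrix of these oriented edges.

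Next I would write down $R$ with the sign convention that records the direction of flow at a shared vertex. For two distinct edges $e$ and $e'$ meeting at a common vertex, I would set $R_{ee'}=+1$ when the head of $e$ coincides with the tail of $e'$ (so that $e$ points into the vertex out of which $e'$ points), set $R_{ee'}=-1$ in the opposite situation, and set $R_{ee'}=0$ when $e$ and $e'$ share no vertex. With this definition the identity $R_{ee'}=-R_{e'e}$ becomes visible term by term: interchanging the two indices exchanges the head--into--tail condition with its reverse, hence exchanges $+1$ and $-1$, while leaving the vanishing entries untouched.

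It then remains to verify the two conditions that anti--symmetry genuinely demands. The diagonal vanishes, $R_{ee}=0$, because $H$ is built from the medians of the faces of $L$ and therefore contains no loop, so no edge can have its own head equal to its own tail. The substantive point, which I expect to be the main obstacle, is to exclude a conflicting double assignment: the simultaneous validity for one pair $\{e,e'\}$ of both the $+1$ and the $-1$ conditions. This would force the head of $e$ to equal the tail of $e'$ and the head of $e'$ to equal the tail of $e$, i.e. it would force $e$ and $e'$ to form an oriented digon and hence to be parallel edges between the same two vertices, which the simple structure of $H$ forbids. The care has to be taken at the inner vertices, which by the earlier theorem all have degree $4$ and through which the circuit passes twice: one must check that the two passages assign to each incident edge a single coherent direction and produce an unambiguous head--into--tail relation on each pair, rather than opposite signs. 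Once this coherence is confirmed, every off--diagonal pair carries exactly opposite signs and the diagonal is zero, so $R=-R^{\mathsf T}$, which is the assertion.
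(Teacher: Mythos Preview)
Your proposal rests on a reading of $R$ that is not the paper's. In the paper, after orienting $H$ along the Euler circuit, $F$ is the ordinary $0$--$1$ vertex adjacency matrix of the resulting digraph, and $R$ is its $0$--$1$ \emph{edge} adjacency matrix: $r_{ij}=1$ precisely when the head of the oriented edge $i$ equals the tail of the oriented edge $j$, and $r_{ij}=0$ otherwise. There are no signed entries; the later theorems in the same section speak only of ``elements $r_{ij}=1$'' and of row sums equal to $2$. Consequently ``anti--symmetric'' here is not the linear--algebra relation $R=-R^{\mathsf T}$, but the digraph property that one never has $r_{ij}=1$ and $r_{ji}=1$ simultaneously. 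Your $\pm1$ sign convention and the identity $R_{ee'}=-R_{e'e}$ that you derive from it are therefore addressed to a different matrix than the one in the statement.

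The paper's actual proof is a single observation: if $r_{ij}=1$ and $r_{ji}=1$ both held, the two oriented edges $i$ and $j$ would form a directed $2$--contour in $H$, and the construction of $H$ (median triangles of the faces of $L$, joined only at vertices) produces a simple graph with no multiple edges, hence no such $2$--contour once oriented. You do isolate exactly this obstruction near the end of your argument --- the forbidden oriented digon --- so the essential idea is present; but it should stand on its own as the whole proof, without the signed superstructure. The coherence check you worry about at degree--$4$ vertices is also unnecessary here: an Euler circuit traverses each edge once, so each edge receives a single unambiguous orientation, and the $0$--$1$ entry $r_{ij}$ is then well defined without any further consistency verification.
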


\begin {proof}
Under the fact of the absence of the contours, which have the length $2$ (according to the construction) in graph $H$.
\end {proof}

\begin {theorem}
Matrix $F$ is anti--symmetric one.
\end {theorem}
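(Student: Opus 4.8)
The plan is to mirror, for the vertex adjacency matrix, the argument just given for the edge adjacency matrix $R$. Recall that $F$ is the adjacency matrix of the vertices $q_i$ of $H$ after every edge has been oriented along the Bi--Euler circuit (the orientation provided by the corollary to the Euler-circuit theorem). By definition $F_{ij}=1$ exactly when $H$ contains an arc directed from $q_i$ to $q_j$, so anti--symmetry is the assertion that one never has $F_{ij}=F_{ji}=1$ simultaneously, together with $F_{ii}=0$ along the diagonal; equivalently, the oriented graph $H$ contains no loop and no contour of length $2$. Thus the whole statement reduces, exactly as for $R$, to the absence of length-$2$ contours in the oriented $H$.

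First I would dispose of the diagonal: since $H$ is built from the medians of the triangular faces of $L$, no edge of $H$ joins a vertex to itself, the circuit uses no loop, and $F_{ii}=0$ for all $i$. The heart of the proof is then to show that between any two distinct vertices $q_i,q_j$ there is at most one edge, so that orientation cannot manufacture two oppositely directed arcs on the same pair. This follows from the construction together with the earlier theorem that two faces of the first subset $\{f_h^{(1)}\}$ share at most a single vertex and never an edge: each triangle $f_h^{(1)}$ carries exactly one edge between any two of its own vertices, while two distinct such triangles meet in at most one vertex and therefore cannot both contain the pair $\{q_i,q_j\}$. Hence $H$ is simple, the fixed orientation assigns each edge a single direction, a contour of length $2$ is impossible, and so $F_{ij}=1$ forces $F_{ji}=0$, giving anti--symmetry.

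The step I expect to require the most care is ruling out a ``doubled'' edge arising from the fact that the Bi--Euler circuit traverses every inner vertex twice. I would make explicit that passing twice through a degree-$4$ vertex does not create a second copy of any edge, since an Euler circuit still traverses each edge exactly once; consequently no multi-edge, and hence no $2$-cycle, is introduced by the orientation. Once this is spelled out and combined with the simplicity of $H$, the conclusion that $F$ is anti--symmetric follows in the same way as for the matrix $R$.
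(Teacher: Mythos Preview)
Your proposal is correct and follows essentially the same approach as the paper: the paper's proof is the single line ``The deduction is by analogy with the previous theorem,'' which in turn rests on the absence of length-$2$ contours in $H$ by construction. You have simply unpacked that analogy, supplying the justification (simplicity of $H$, no multi-edges from the Euler orientation) that the paper leaves implicit.
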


\begin {proof}
The deduction is by analogy with the previous theorem.
\end {proof}

\begin {theorem}
Matrix $R$ is the quasicanonical one.
\end {theorem}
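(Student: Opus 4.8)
The plan is to exploit the single Bi--Euler circuit established above and the orientation of all edges along it (the corollary stating that the edges $r_i \in H$ may be oriented in the direction of passing this circuit). Since every edge of $H$ is oriented in the direction in which the circuit traverses it, the circuit itself induces a linear order on the edges, and I would enumerate them $r_1, r_2, \ldots, r_{m_H}$ in the order in which the Bi--Euler circuit passes through them, fixing one external vertex as the starting point of the traversal. The entry $R_{ij}$ is then $1$ precisely when the head of $r_i$ coincides with the tail of $r_j$, i.e. when $r_j$ may be traversed immediately after $r_i$.

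First I would record the transitions that the circuit actually uses: the successor of $r_i$ along the circuit is $r_{i+1}$, which places the superdiagonal entries $R_{i,i+1}=1$, while the single closure of the circuit places the corner entry $R_{m_H,1}=1$. These already form a single $m_H$--cycle skeleton, which is the canonical cyclic form and which, for $m_H>2$, is anti--symmetric, in agreement with the previous theorem. The external vertices of degree $2$ are visited exactly once and contribute only such superdiagonal transitions, so they introduce no further entries.

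Next I would account for the remaining nonzero entries, which arise only at the inner vertices. By the corollary that every inner vertex of $H$ has degree $4$, the Bi--Euler circuit visits each such vertex exactly twice, so each of its two incoming edges becomes head--adjacent to each of its two outgoing edges. Besides the two transitions already used by the circuit, this creates cross--adjacencies at the vertex, one of which may fall below the diagonal of $R$. The content of the statement is that, after a simultaneous permutation of the rows and columns of $R$, all such below--diagonal entries can be gathered into one controlled region bordering the circuit--closure corner, leaving $R$ upper triangular elsewhere; this almost--triangular shape is the quasicanonical form.

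I expect the main obstacle to be exactly the control of these backward cross--adjacencies at the doubly visited degree--$4$ vertices: one must verify that the orientation forced by the single global circuit confines every backward cross--entry to the corner block and never scatters them throughout the lower triangle. The leverage here is that $H$ carries one global Bi--Euler circuit rather than several disjoint ones, so a single consistent numbering orders all edges at once; proving that the two passages through each inner vertex are compatible with one fixed forward numbering, up to the bounded closure region, is where the real work of the theorem lies.
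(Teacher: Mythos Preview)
Your argument is aimed at the wrong target. In this paper (and the reference \cite{Malinin} it relies on) ``quasicanonical'' does \emph{not} mean ``almost upper triangular after reordering the edges along the circuit.'' It means that the nonzero entries of $R$ can be partitioned into pairwise non-intersecting $2\times 2$ submatrices. The paper's proof is one sentence: since (after orienting along the Bi--Euler circuit) each vertex $q_i$ of $H$ has two incoming and two outgoing arcs, the entries $r_{ij}=1$ with head$(r_i)=$ tail$(r_j)=q_i$ occupy exactly the $2\times 2$ block whose rows are the two incoming arcs and whose columns are the two outgoing arcs at $q_i$; distinct vertices give disjoint blocks, so all $1$'s sit inside non-intersecting $2\times 2$ submatrices. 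The corollary that follows in the paper, $\sum r_{ij}=4k$, is meant to be read off directly from this block description.

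Your entire plan --- numbering the edges by the circuit, placing a cyclic superdiagonal skeleton, and then trying to corral the ``backward cross--adjacencies'' at degree--$4$ vertices into a corner block --- is therefore proving a different, and harder, statement. Even on its own terms it would not go through: at a generic inner vertex the two visits of the Bi--Euler circuit can be arbitrarily far apart in the circuit ordering, so the extra cross entry $R_{ij}$ you identify lands at an essentially uncontrolled position below the diagonal, not in any bounded closure region. There is no mechanism in your outline that forces those entries toward a single corner, and none exists in general. Once you adopt the paper's definition, the whole Euler--ordering apparatus is unnecessary; the only fact you need is the in/out degree count at each vertex.
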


\begin {proof}
Indeed, as all the vertexes $q_i$ of graph $H$ have two by two both the entering and the coming out edges. So, all the elements $r_{ij}=1$ of matrix $R$ are distributed properly among the non-intersecting submatrixes $|r_{ij} |_k^p$, which size is $2\times 2$ \cite {Malinin}.
\end {proof}

\begin {corollary}
The total sum of the elements $r_{ij}=1$ of matrix $R$ is multiple to $4$, that is: $\sum_{i=1}^{m_H}\sum_{j=1}^{m_H}r_{ij}=4k$, where $k=3,4,5,...K$, and $K$ --- is the arbitrary large integer number.
\end {corollary}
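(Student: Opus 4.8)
The plan is to derive the statement directly from the preceding theorem, which guarantees that every $1$-entry of $R$ belongs to one of the disjoint $2\times 2$ submatrices $|r_{ij}|_k^p$. First I would fix the orientation furnished by the Bi--Euler circuit (the corollary following the Bi--Euler theorem), under which $R$ is exactly the adjacency matrix of the line digraph of $H$: $r_{ij}=1$ iff the head of arc $i$ coincides with the tail of arc $j$. In this reading the grouping asserted by the quasicanonical theorem becomes a grouping of the $1$'s around the vertexes of $H$.

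The crucial step is to show that each $2\times 2$ block is \emph{completely filled}, i.e. contains four $1$'s rather than fewer. At a vertex $q$ of degree $4$ the Euler circuit passes twice, so $q$ has exactly two incoming arcs $e_1,e_2$ and two outgoing arcs $f_1,f_2$; since every incoming arc is head-to-tail adjacent to every outgoing arc, all four entries $r_{e_af_b}$ equal $1$, and these are precisely the entries of the block attached to $q$. Moreover each arc has a single head and a single tail, so the row set (incoming arcs) and column set (outgoing arcs) of one vertex are disjoint from those of any other; the blocks are therefore genuinely non-intersecting. Summing block by block gives
$$\sum_{i=1}^{m_H}\sum_{j=1}^{m_H} r_{ij}=4k,$$
where $k$ is the number of blocks, and the divisibility by $4$ follows at once. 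For the lower bound I would use the vertex count $n_H=3n_L-6$ with $n_L\ge 4$: the smallest triangulation already forces at least three such blocks, so $k=3,4,5,\dots$ as claimed.

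The main obstacle is the external (degree-$2$) vertexes of $H$. By the earlier theorem each of these carries only one incoming and one outgoing arc, so it contributes an \emph{isolated} $1$ rather than a full $2\times 2$ block, and such a stray entry would break the divisibility argument. To remove the exception I would invoke the theorem that any planar graph can be laid on the sphere: on the sphere $L$ has no distinguished infinite face, every edge borders two triangles, and hence every vertex of $H$ acquires degree $4$. With all vertexes of degree $4$ the block decomposition is exact, every $1$ lies in some full $2\times 2$ submatrix, and the total is a true multiple of $4$. Carrying the identity faithfully between the planar and spherical pictures --- and thereby justifying that no $1$'s are left outside the blocks --- is the delicate point of the proof.
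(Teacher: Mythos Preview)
Your core approach coincides with the paper's: the corollary is presented there with no separate proof, as an immediate consequence of the preceding quasicanonicality theorem (the $1$-entries of $R$ fall into disjoint $2\times 2$ submatrices, hence the total is $4k$). So the ``count full $2\times2$ blocks'' idea is exactly what the paper intends.

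Where you diverge is in rigor, not in strategy. The paper's one-line proof of quasicanonicality simply asserts that ``all the vertexes $q_i$ of graph $H$ have two by two both the entering and the coming out edges,'' i.e.\ it treats every vertex of $H$ as having degree $4$, and never revisits the degree-$2$ external vertices it established earlier. You have correctly spotted that those vertices would produce $1\times1$ blocks and spoil the divisibility, and your proposed fix---pass to the spherical embedding so that every edge of $L$ borders two triangles and every vertex of $H$ has degree $4$---is a sound way to close the gap. The paper does not make this move; it simply leaves the inconsistency unaddressed (and indeed relies on the ``two per row'' claim again in the very next theorem). So your argument is the same in spirit but more careful than the paper's.

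One small slip in your lower-bound remark: if $k$ counts the $2\times2$ blocks then $k$ equals the number of (degree-$4$) vertices of $H$, so with $n_L\ge 4$ you get $k=n_H\ge 6$, not $k\ge 3$; the paper's starting value $k=3$ corresponds to the degenerate case $n_L=3$.
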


\begin {theorem} \label{MatrixRedges}
The number of the edges $m_H$ of graph $H$ and also the number of the rows in matrix $R$ is multiple to $6$.
\end {theorem}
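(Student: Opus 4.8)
The plan is to read off the divisibility directly from the explicit formula for $m_H$ that has already been established. By the earlier theorem giving the dependences for the planar conjugated triangulation, we have $m_H = 6n_L - 12$, where $n_L$ is the number of vertexes of the planar triangulation $L$. So the entire task reduces to exhibiting the factor $6$ and then linking $m_H$ to the size of $R$.

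First I would factor the right-hand side as
$$m_H = 6n_L - 12 = 6(n_L - 2).$$
Since $L$ is a planar triangulation, $n_L$ is a positive integer with $n_L \ge 3$ (every inner vertex of a triangulation has degree at least $3$, and one needs at least three vertexes to form a single triangular face), so $n_L - 2$ is a positive integer. Setting $k = n_L - 2$ gives $m_H = 6k$, which is precisely the assertion that $m_H$ is a multiple of $6$.

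It then remains to connect $m_H$ with the number of rows of $R$. By construction $R$ is the adjacency matrix of the edges of $H$, hence a square matrix of order $m_H$; in particular its number of rows equals $m_H$. Combining this with the factorization above shows that both $m_H$ and the number of rows of $R$ equal $6(n_L - 2)$, a multiple of $6$.

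There is essentially no obstacle here: the whole content is packaged into the earlier identity $m_H = 6n_L - 12$, so the statement is a one-line corollary of that theorem. The only point deserving a word of care is the admissible range of $n_L$ --- one should note $n_L \ge 3$ so that the factor $n_L - 2$ is a genuine positive integer and $m_H$ is not merely formally divisible by $6$ but is a legitimate positive edge count. As an optional consistency check one could compare the result with the quasicanonical structure of $R$: the $2\times 2$ blocks guaranteed by the quasicanonical theorem distribute the unit entries, and the evenness of the row count $m_H$ is compatible with $R$ being anti--symmetric; however, neither observation is needed for the divisibility by $6$.
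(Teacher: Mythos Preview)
Your argument is correct and coincides with the paper's second (alternative) proof, which also reads the result off the identity $m_H = 6n_L - 12$; you simply factor out the $6$ directly, whereas the paper routes through $\mu_H^{(1)} = 2(n_L-2)$ being even and then $m_H = 3\mu_H^{(1)}$.

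The paper's \emph{primary} proof takes a genuinely different path: it first observes $m_H = 3\mu_H^{(1)}$ from the triangle structure of $H$, so the total $\sum r_{ij} = 2m_H$ is a multiple of $6$; it then invokes the quasicanonicality corollary, which says $\sum r_{ij}$ is a multiple of $4$; combining the two forces $\sum r_{ij}$ to be a multiple of $12$, hence $m_H$ a multiple of $6$. Your route is shorter and needs nothing about $R$ beyond its order being $m_H$; the paper's first route has the (modest) virtue of deriving the divisibility purely from the local structure of $H$ and the block form of $R$, without appealing to the global formula in $n_L$.
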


\begin {proof}
As graph $H$ is constructed from the triangles (it has no edges, which do not enter any triangle), which are connected with the help of the vertexes, the number of the edges of graph $H$ is equal to: $m_H=3\mu_H^{(1)}$. That's why the number of graph's $H$ edges $m_H$ is always multiple to 3. Next, in each matrix's $R$ row we have the exactly two elements $r_{ij}=1$, according to the presence of the double Euler circuit in graph $H$. That's why it is evident, that the total sum of the matrix's elements is equal to: $\sum_{i=1}^{m_H}\sum_{j=1}^{m_H}r_{ij}=2m_H=6\mu_H^{(1)}$, or it is multiple to $6$.

In compliance with the corollary from the theorem on the quasicanonicality of matrix $R$ the total sum of all the elements in it is equal to: $\sum_{i=1}^{m_H}\sum_{j=1}^{m_H}r_{ij}=4k$, so it is multiple to $4$. From the formulas above it also follows that the total sum must be equal to: $\sum_{i=1}^{m_H}\sum_{j=1}^{m_H}r_{ij}=12k$, where $k=1,2,...K$. So, the number of the matrix's $R$ rows must be multiple to $6$.

Thus: $m_H=6k_1$, where $k_1=1,2,...K$
\end {proof}

There may also be another proof of this theorem.

\begin {proof}
From $m_H=3\mu_{H_1 }=3\mu_L$ and $m_H=6n_L-12$ (see above) it follows that: $3\mu_{H_1 }=6n_L-12$. It means that: $\mu_{H_1 }=2(n_L-2)$ --- is always even. So: $m_H=3\mu_{H_1}$ - is multiple to $6$, that is: $m_H=6\cdot k_2$, where $k_2=1,2,...,K$.
\end {proof}

\begin {theorem} \label{MatrixFrows}
The number $n_H$ of matrix $F$ rows is multiple to $3$.
\end {theorem}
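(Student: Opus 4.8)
The plan is to reduce the claim to the vertex–count formula for graph $H$ already established above. Since matrix $F$ is by definition the adjacency matrix of the vertexes of the planar conjugated triangulation $H$, the number of its rows equals $n_H$, the number of vertexes of $H$. Hence it suffices to show that $n_H$ is divisible by $3$.

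First I would invoke the relation $n_H = 3n_L - 6$, proved in the theorem on the dependences of PCT on PT. Rewriting the right-hand side as $n_H = 3(n_L - 2)$ exhibits $3$ as an explicit factor; and since $n_L \ge 3$ for any planar triangulation, the quantity $n_L - 2$ is a positive integer. Therefore $n_H = 3k$ with $k = n_L - 2$, which is exactly the assertion, and one may note $k = 1, 2, \dots, K$ in line with the style used for the preceding matrix theorem.

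As an independent confirmation (mirroring the double proof given for the number of edges in Theorem \ref{MatrixRedges}), I would instead start from the counting identity $2n_H = 3\mu_H^{(1)}$, that is $n_H = \tfrac{3}{2}\mu_H^{(1)}$, and combine it with $\mu_H^{(1)} = 2n_L - 4 = 2(n_L - 2)$. The factor $2$ in $\mu_H^{(1)}$ cancels the denominator $2$, again yielding $n_H = 3(n_L - 2)$, so that the divisibility is recovered without appealing directly to $n_H = 3n_L - 6$.

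The main obstacle here is essentially bookkeeping rather than mathematics: the statement is an immediate corollary of the counting already carried out, so the only care required is to cite the formula $n_H = 3n_L - 6$ (or equivalently $2n_H = 3\mu_H^{(1)}$ together with the evenness of $\mu_H^{(1)}$) correctly, and to observe that $n_L - 2$ is genuinely an integer, which is guaranteed because $n_L$ counts vertexes. No new combinatorial argument about the structure of $H$ is needed beyond what has been proved.
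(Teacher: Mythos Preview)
Your proposal is correct. Your second route --- starting from $2n_H = 3\mu_H^{(1)}$ and using the evenness of $\mu_H^{(1)}$ (i.e.\ $\mu_H^{(1)} = 2k_3$, hence $n_H = 3k_3$) --- is exactly the argument the paper gives. Your first route, factoring $n_H = 3n_L - 6 = 3(n_L - 2)$ directly, is a slightly more economical shortcut that the paper does not take explicitly here, though the formula $n_H = 3n_L - 6$ has of course already been established earlier. Either way the content is the same; there is nothing missing.
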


\begin {proof}
It is known that: $2n_H=3\mu_H^{(1)}$. It was proved above that $\mu_H^{(1)}$ is even, notably, $\mu_H^{(1)}=2k_3$, where $k_3$ --- is integer.
 
This implies that $2n_H=6k_3$ and the number of the graph's $H$ vertexes is multiple to $3$. The theorem is proved.
\end {proof}

The formulas above permit us to compose a table of the possible numbers of the values: $n_L$, $n_H$, $m_H$ and $\nu(H)$ relatively to some hypothetical integer number $k$. We have $3k_3=3n_L-6$, that is: $n_L=k_3+2$, where $k_3=0,1,2,...K$. Let us compose a table (fig. \ref {Fig:11}), but replace all the coefficients by a single one $(k_1\rightarrow k_2\rightarrow k)$.

\begin{figure}[htb]
		\includegraphics[width=0.8\textwidth]{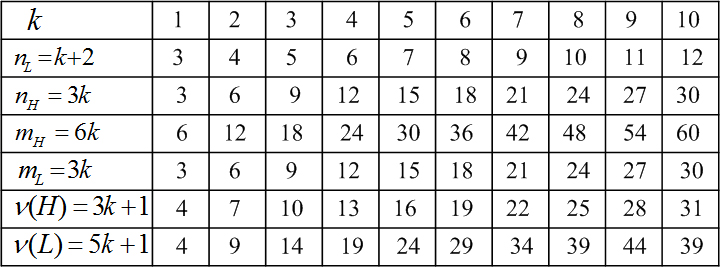}
	\caption{}
	\label{Fig:11}
\end{figure}

So, in proportion to the value of coefficient $k$ we can possess only such triangulations as it is presented in table (fig. \ref{Fig:11}). We can also construct the graphic portrayals of some useful functions. Fig. \ref {Fig:12} presents the relationship of the comparative number of the logical equations for one logical variable from the number of the vertexes in the planar triangulation. The function $f(n_L )=\nu(L)/n_L$ is presented by the solid line; the function $\varphi(n_H)=\nu(H)/n_H$ is presented by the dotted line. 

Fig. \ref {Fig:12} shows that the equations set is an over determined one for graph $L$, beginning from the number of the vertexes equal to $5$. If the number of the vertexes is less than $5$, then the equations set is the under-determined one. For graph $H$ the equations set is always over-determined, but the degree of the over-determination comes down with the increasing of the number of vertexes.

\begin{figure}[htb]
		\includegraphics[width=0.9\textwidth]{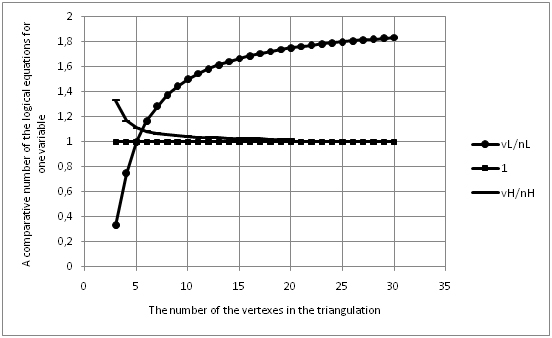}
	\caption{The relations between the cyclomatic numbers and numbers of the vertexes for the planar triangulation and the planar conjugated triangulation}
	\label{Fig:12}
\end{figure}

The relations of the vertexes' number ($n_L$ --- solid line), edges' number ($m_L$ --- dotted line) and the cyclomatic number ($\nu(L)$ --- chain line) from coefficient $k$ for the planar triangulation are presented in fig. \ref{Fig:13}. 

\begin{figure}[htb]
	\includegraphics[width=0.7\textwidth]{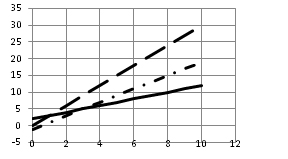}
	\caption{}
	\label{Fig:13}
\end{figure}

The relations of the vertexes' number ($n_L$ --- solid line), edges' number ($m_L$ --- dotted line) and the cyclomatic number ($\nu(L)$ --- chain line) from coefficient $k$ for the planar conjugated triangulation are presented in fig. \ref {Fig:14}.

\begin{figure}[htb]
		\includegraphics[width=0.7\textwidth]{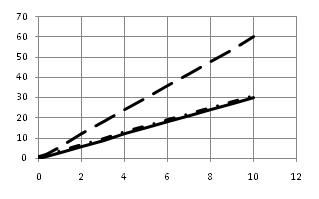}
	\caption{}
	\label{Fig:14}
\end{figure}

Let us examine some cases, based on the data from table in fig. \ref {Fig:11}.

\subsection {Case 1}

The initial data: $k=1$; $n_L=3$; $n_H=3$; $m_H=6$; $\nu(H)=4$. 
This case (fig. \ref{Fig:15}) corresponds to the map of three countries, for which the Four Color Problem is meaningless. The case appears to be the peculiar one, because:

\begin{enumerate}
\item	
The degrees of the vertexes are: $\rho(v_i )=2$.
\item	
The conjugated triangulation contains cycles equal to $2$.
\end {enumerate}

\begin{figure}[htb]
		\includegraphics[width=0.9\textwidth]{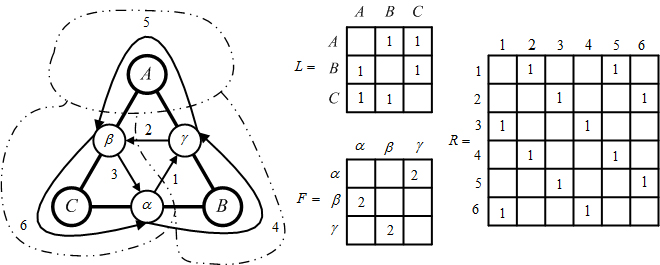}
	\caption{}
	\label{Fig:15}
\end{figure}

\subsection {Case 2}

Initial data: $k_2=2$; $n_L=4$; $n_H=6$; $m_H=12$; $\nu(H)=7$; $\sum_{i=1}^{m_H}\sum_{j=1}^{m_H}r_{ij}=24$.  
It is a case, which corresponds to the map of $4$ countries. The triangulation is a tetrahedron one. The task is knowingly solvable and is presented in fig. \ref{Fig:16}.

\begin{figure}[htb]
	\includegraphics[width=1\textwidth]{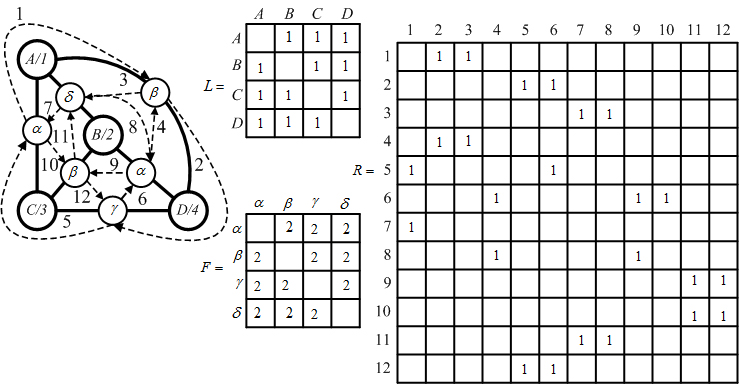}
	\caption{}
	\label{Fig:16}
\end{figure}

\subsection {Case 3}

Initial data: $k_2 \ge 3$. 
Beginning from this case, all the possible variants correspond to the Four Color Problem and may be formally solved only if Four Color Problem is the equitable one.

\section {The compactness of the matrixes' $F$ and $R$ filling}

Let us examine the problem of matrix's $F$ compactness filling ($F$ --- is an adjacency graph's $H$ matrix of the vertexes). The total number of the elements in such a matrix is equal to the doubled number of the vertexes:
 $$\sum_{i=1}^{n_H}\sum_{j=1}^{n_H}r_{ij}=2n_H$$.

In this case the matrix's compactness filling will be equal to: $\sigma_F=(2n_H)/(n_H )^2 =2/n_H $.
The number of the rows, which fit to one element in the row, may be presented as:  $1/\sigma_F =n_H/2$.

Thus, for the principal possibility to compose matrix $F$, taking into account its anti-symmetry, each matrix's element ought to correspond to not less than two cells, excluding the diagonal cells, that is: $((n_H )^2-n_H)/(2\cdot 2\cdot n_H )\ge 1$, beginning from the number of the vertexes $n_H=3$. So, we have: $(n_H )^2-n_H\ge 4 n_H$ or $n_H\ge 5 n_H$.

And finally: $n_H\ge5$.

Thus, matrix $F$ must have the number of rows, not less than $5$. Under the condition of theorems \ref{MatrixRedges} and \ref{MatrixFrows} (a condition of the multiplication factor: the minimal number of rows must be equal to $6$ (multiplication factor to $3$). Thus, case $1$, mentioned above, must be excluded. So, from the obtained result and the formula: $m_H=2n_H$, it follows that the minimal number of the rows in matrix $F$ must be equal to: $m_{H_{min}}=12$.

Let us examine the problem of the filling compactness of matrix $R$, which is the adjacency matrix of graph's $H$ edges.

The total amount of the elements in matrix $R$ can be expressed with the formula: $$\sum_{i=1}^{m_H}\sum_{j=1}^{m_H}r_{ij}=2m_H$$.

The compactness of the matrix' filling may be represented in the following way: $\sigma_R=(2m_H)/(m_H )^2 =2/m_H$. The reciprocal value will be: $1/\sigma_R =m_H/2=\lambda$, where: $\lambda$ --- is the number of the cells, fitting to one element.

So, for the principal possibility to compose matrix $R$, under the condition of its anti-symmetry, not less than two cells must fit to one element, excluding the diagonal cells. 

Similarly to the antecedent, we can get the general conditions, which are necessary for the composing of matrix $R$: $(m_H^2-m_H)/(2\cdot 2\cdot m_H )\ge 1$ or $m_H^2-m_H\ge 4 m_H$.

Finally we also have: $m_H\ge 5$.

Thus, the anti-symmetric matrix $R$, which meet the above-listed requirements, must have a number of the rows not less than $5$. It superfluously confirms the fact that the number of the rows must be multiple to $6$. In addition we know that the number of the rows must be more than $6$.

Thus, both matrixes $R$ and $F$, according to the number of the elements and the compactness of the filling meet the same requirements. But there exist some differences.
 
Matrix $R$ is always the quasicanonical one, because it is the adjacency edge matrix, and it can be converted with the help of the reverse converting  \cite {Malinin} into matrix $F$.

As opposed to it, matrix $F$ always appears to be the arbitrary matrix, and, as a rule, in a general case does not appear to be the quasi-canonical one as an adjacency matrix. Thus, as a rule, matrix $F$ cannot be considered as an adjacency edge matrix of some graph $G$, for which graph $H$ is a conjugated one.

Table in fig. \ref{Fig:17} shows us the relations between both the graphs and their conversions; also it shows the fact of either absence or existence of either Euler or Hamilton circuits in the graphs. 

\begin{figure}[htb]
		\includegraphics[width=0.9\textwidth]{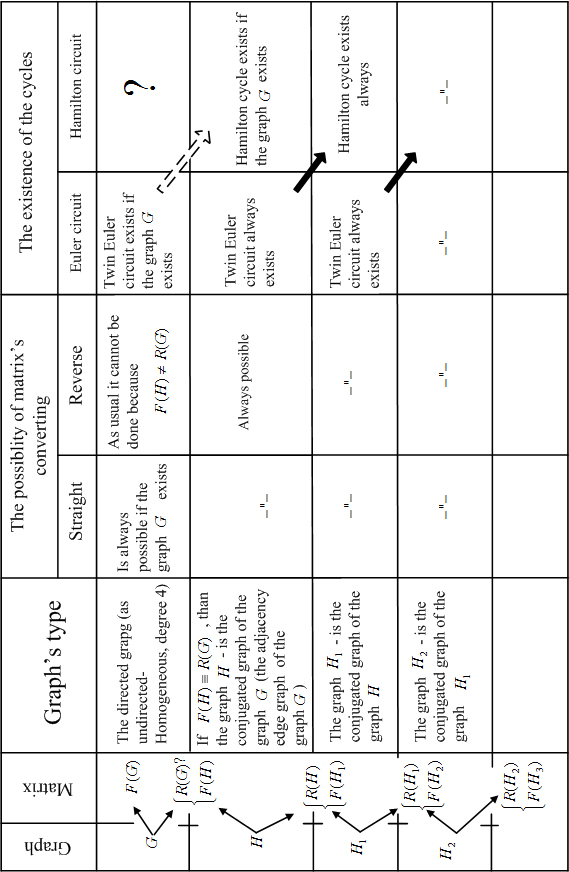}
	\caption{}
	\label{Fig:17}
\end{figure}

\section {The requirements for the matrix's $F$ properties}

Let us examine what requirements must meet matrix $F(H)$, in order to be considered as matrix $R(G)$.

Each submatrix, which have the size $2\times2$, in matrix $R(G)$ corresponds in graph $G \equiv H_1$ to one cycle of the length equal to $4$, if we consider this matrix as matrix $F(H_1)$\footnote{Graph $H_1$ is the first straight conversion of graph $H$ \cite{Malinin}.}. Thus, for the condition $F(H)\equiv R(G)$ to be fair, it is necessary for graph $H$ to have the number of the cycles with the length equal to $4$ not less than $n_H/2$. This condition also means that graph $L$, to which graph $G$ is corresponding, must have the number $n_L^{(4)}$ of the vertexes with the degree equal to $4$ is not less than $n_L/2$. The maximal number of the vertexes $n_L^{(4)}$ cannot be more than $n_L$. 

But
$$n_L=\frac{n_H+6}{3}=\frac{n_H}{3}+2$$.

Let us find the minimal number $n_H$, at which the existence of the quasi-canonical matrix $F(H) \equiv R(G)$ is possible.

We know that
 $$\frac{(n_H)_{max}+6}{3}=\frac{(n_H)_{max}}{2}$$. 

We can also write: 
$$2(n_H)_{max}+12=3(n_H)_{max}$$. 

So, we can get: $(n_H)_{max}=12$.

And finally: 
$$(n_L)_{max}=\frac{12+6}{3}=6$$.

Thus, only the adjacency matrixes of either the tetrahedron, or the quinquecuspidate polyhedron, or the octahedron can be the adjacency matrixes of graph's $G$ edges.

So, the very important conclusion follows that not a single adjacency vertexes' matrix $F(H)$, which has the number of the vertexes of the initial triangulation more than $7$ $(n_L \ge 7)$ can be subjected to the operation of the reverse converting.

We must draw special attention to this conclusion, because exactly this one will be needed at the examining the necessary and sufficient conditions of the right coloring of the graph into four colors.

Graph $H$ always has a double (twin) Euler circuit. So, graph $H_1$ (the straight conversion of graph $H$) always has Hamilton cycle \cite {Malinin}. But for the existence of Hamilton cycle in graph $H$ it is sufficient for graph $G$ to have a double (twin) Euler circuit. It is possible only if graph $G$ exists as an inverse conversion of graph $H$.

But graph $G$ cannot exist as an inverse conversion at the number of the vertexes $n_L \ge 7$. So, it is impossible to prove the existence of Hamilton cycle in graph $H$, relying on the operation of the inverse conversion of graph $H$. Thus, it is impossible in the limits of the existing axiomatic system.

Meanwhile, it will be useful to recollect Hedel's theorem on the incompleteness \cite {Hedel}. It is proved there that any formal axiomatic system contains the unsolvable hypotheses or that in any sufficiently complicated non contradictory theory a statement exists, which cannot be either proved or disproved by means of this theory.

The meaning of Hedel's theorem is also in the statement that any sufficiently strong formal non contradictory logic-mathematical system obligatory contains either the formula or the statement, which cannot be either proved or disproved in this system, but which is true. But this fact may be shown only with the help of some other methods, which go beyond the scope of the existing system. 

Hedel's theorem is applicable not only to the arithmetic, but also to every science, including philosophy. The global meaning of Hedel's theorem is in the fact that every science has problems, which cannot be solved beyond the scope of this science, but only in the scopes of a new more general extended theory. The methodological significance of Hedel's theorem consists in the fact that any fundamental theory is either the contradictory one, or is insufficient for the solution of some problems, which are arising in it. The appearing of the solution of the Four Color Problem with the help of the computer may serve one of the possible withdrawals for the Four Color Problem.

So, let us make first, sufficiently wary suggestion that the Four Color Problem is the statement, which we perhaps will not be able to prove in theorems because of Hedel. But, though the Four Color Problem contains the affirmation on its improvability, however, it is true. On the one hand, the computer proofs exist. On the other hand, over a period of one hundred and sixty years no contrary instances were found.

Let us suggest that the Four Color Problem is the true proposition, in spite of all the attempts to prove it with the help of the existing axiomatic system. Hence a quite acceptable conclusion follows: the existing axiomatic system is incomplete. It seems most probably that we must do the following: we ought to accept the Four Color Problem as an axiom and thus enlarge the existing list of the today's axiomatic system. The system will appear to become a complete one. We remind, that it is only a suggestion.

\section {Conclusions}

\begin{enumerate}
\item	
The properties of the planar graph, being proved earlier by other authors were examined. These properties will make the base for the theorems, which show the properties of the planar conjugated triangulation.
\item	
The graph, which is dual to the planar triangulation, is introduced.
\item	
Some theorems are proved, which shows the first properties of the planar conjugated triangulation.

\begin {itemize}
\item	
Theorem on the vertex's degrees.
\item	
Theorem on the existence of Euler circuit.
\item	
Theorem on two classes of the planar conjugated triangulation's faces.
\item	
Theorems on the properties of both the faces and the vertexes of the planar conjugated triangulation.
\end{itemize}
	
\item	
The Formulation of the theorem on both the chromatic class and chromatic number of the planar conjugated triangulation.

\item	
The relation of the planar conjugated triangulation edge's number depends on the vertexes' number of the planar triangulation.
	
\item	
The relation of the planar conjugated triangulation vertexes' number depends on the faces' number of the planar triangulation.
	
\item	
The relation of the planar conjugated triangulation faces depends on the vertexes' number of the planar triangulation.
	
\item	
The cyclomatic number of the planar conjugated triangulation depends only on its vertexes' number or on the edges' number of the planar triangulation.

\item 
The relations between the numeric properties of the planar graphs are received, notably: the homogeneous planar graph $M$, the planar triangulation $L$ and the planar conjugated triangulation $H$.

\item 
It was demonstrated that at the transformation of the planar triangulation $L$ into the planar conjugated triangulation $H$ the cyclomatic number increases and its increment depends on the cyclomatic number of the planar triangulation, in addition it is always even.

\item 
The numeric properties were identified and estimated.

\begin {itemize}
\item	
The number of all the finite faces of graph $H$: $\mu_H=\mu_H^{(1)}+\mu_H^{(2)}$, where: $\mu_H^{(1)}=\mu_L$ and $\mu_H^{(2)}=n_L-1$.
\item	
The number of the vertexes of graph $H$: $n_H=3n_L-6$ or $n_H=m_H/2$.
\item	
The number of the edges of graph $H$: $m_H=6n_L-12$ or $m_H=2n_H$.
\item	
The cyclomatic number of graph $H$ can be expressed with the following formulas:

\begin {itemize}
\item 
$\nu(H)=\mu_H=\mu_H^{(1)}+\mu_H^{(2)}$
\item 
$\nu(H)=3(\mu_H^{(1)})/2 +1=3(\mu_L)/2 +1$
\item 
$\nu(H)=3(\nu(L))/2 +5/2=\nu(L)$
\item 
$\nu(H)=3n_L-5$
\end {itemize}
\end {itemize}

\item 
The adjacency edge matrix $R$ of graph $H$ possesses the following characteristics:

\begin {itemize} 
\item 
Matrix $R$ --- is anti-symmetric one.
\item
The number of the rows in matrix $R$ is multiple to $6$. That is: $m_H=6k_1$, where $k_1$ --- integer. For the real cases $m_{H_{min}}=12$, that is: $k_1=2,3,4...$.
\item 
The mean number of the cells corresponding to one element in the row is: $1/\sigma_R =m_H/2$.
\item 
The number of the elements in each row is equal to $2$.
\item 
The total sum of all the elements in matrix $R$: $\sum_{i=1}^{m_H} \sum_{j=1}^{m_H}r_{ij}=2m_H$.
\item 
All the elements $r_{ij}=1$ of matrix $R$ are distributed among the submatrixes $\left\|r_{ij}\right\|_1^n$, which size is $2 \times 2$, that is why matrix $R$ --- is a quasi-canonical one.
\end {itemize}

\item 
The adjacency vertex matrix $F$ of graph $H$ possesses the following characteristics:

\begin {itemize}
\item 
Matrix $F$ --- is anti-symmetric one.
\item 
The number of the rows $n_H$ in matrix $F$ is equal to $3$, and the minimal number of the rows for a real case is: $n_{H_{min} }=6$.
\item 
The mean number of the cells corresponding to one element in the row is: $1/\sigma_F =n_H/2$.
\item 
The number of the elements in each row is equal to $2$. 
\item 
The total sum of all the elements in matrix $F$: $\sum_{i=1}^{n_H}\sum_{j=1}^{n_H}r_{ij}=2n_H$.
\item
None of matrixes $F(H)$ at $n_H>12$ (or at $n_L>6$) can be a quasi-canonical adjacency matrix of graph's $G$ edges.
\end {itemize}

\item 
At $n_H>12$ none of matrixes $F(H)$ can be exposed to the operation of the reverse conversion. 
\end{enumerate}

As a finall result: it is impossible to construct graph $G$ as the triangulation $G$ at $n_G>6$ and also at the condition of the existing of the double Euler circuit.

Here also comes the impossibility to prove the existence of the Hamilton circuit in graph $H$ on the base of the properties of the dual graphs, that is: on the base of the existing axiomatic system. 

So, we again come into the collision with the first possible proof on the impossibility to solve Four Color Problem. Let us try from the other side and examine the characteristics of the planar triangulations more thoroughly. Let us suggest that Four Color Problem is unprovable. But the proofs with the help of the computer also come into collision with enormous computing problems at the solution of the Four Color Problem. The situation in the following computer proofs became much better after the proof of Appel and Heiken was done. But nevertheless the situation with the computer proofs is not that transparent for us to verify the mathematical certainty of the computer proof.

Well, what extraordinary characteristics of the planar graphs are preventing us from learning the truth? 

In search of the answer let us examine the special properties of the planar triangulations, but not immediately. In the next article we'll try to examine the properties of some special graphs.

\section {Acknowkledgments}

A lot of thanks for those members of my family who undergone all the difficulties side by side with me and who encouraged me in my work.  

Also I will be very grateful to those readers, who will find and send me a word about the uncovered misprints or some errors in order to improve the text. 

\begin {thebibliography} {99}

\bibitem {Berge} \textsc{Berge, C.}:\ \textit{ Theore des graphes et ses applications}, DUNOD, Paris (1958), 319

\bibitem {Euler} \textsc{Euler, L.}:\ \textit{Solito problematis ad geometriam situs pertinentis}, Comment. Academiae Sci. I. Petropolitanae, (1736), 128-140

\bibitem {Harary} \textsc{Harary, F.}:\ \textit{Graph theory}, Addison-Wesley publishing company, London, (1969), 301

\bibitem {Malinina} \textsc{Malinin, L., Malinina, N.}:\ \textit{On the soluton of the Graph Isomorphism Problem}, Part I, ArXiv (Cornell University Library), http://arxiv.org/abs/1007.1059, 2010, 49

\bibitem {Malinin} \textsc{Malinin, L., Malinina, N.}:\ \textit{Graph isomorphism in theorems and algorithms}, LIBROCOM, Moscow, 2009, 249

\bibitem {Hedel} \textsc{Hedel, K.}:\ \textit{Uber formal unentscheidbare Satre der Principia Matamatica und verwander System}, I. Monatshefte fur Matematic und Physik, 38, 173-198

\end {thebibliography} 

\end{document}